\mathchardef\Re="023C
\mathchardef\Im="023D
\DeclareMathOperator*{\minimize}{minimize}
\DeclareMathOperator*{\st}{subject~to}
\newtheorem{theorem}{Theorem}
\newtheorem{lemma}{Lemma}
\newtheorem{corollary}{Corollary}
\newtheorem{definition}{Definition} 
\newtheorem{proof}{Proof}
\newtheorem{remark}{Remark}
\newtheorem{example}{Example}
\begin{document}

\title{
	 \LARGE \bf Sparsity Invariance for Convex Design of\\ Distributed Controllers
	}
	 \author{Luca Furieri, Yang Zheng, Antonis Papachristodoulou,  and Maryam Kamgarpour\footnote{This research was gratefully funded by the European Union ERC Starting Grant CONENE.  Antonis Papachristodoulou  
was supported in part by the EPSRC project EP/M002454/1. 
Luca Furieri and Maryam Kamgarpour are with the Automatic Control Laboratory, Department of Information Technology and Electrical Engineering, ETH Z\"{u}rich, Switzerland. E-mails: {\tt\footnotesize \{furieril, mkamgar\}@control.ee.ethz.ch}. 
Yang Zheng is with the School Of Engineering And Applied Sciences, Harvard Center for Green Buildings and Cities, Harvard University. E-mail: {\tt \footnotesize zhengy@g.harvard.edu}.
 Antonis Papachristodoulou is with the Department of Engineering Science, University of Oxford, United Kingdom. Email: {\tt\footnotesize antonis@eng.ox.ac.uk}.
	%
}
}

\maketitle

\begin{abstract} 
We address the problem of designing optimal linear time-invariant (LTI)  sparse controllers for  LTI systems, which corresponds to minimizing a norm of the closed-loop system subject to sparsity constraints on the controller structure. This problem is NP-hard in general and motivates the development of tractable approximations. We characterize a class of convex restrictions based on a new notion of Sparsity Invariance (SI). The underlying idea of SI is to design sparsity patterns for transfer matrices $\mathbf{Y}(s)$ and $\mathbf{X}(s)$ such that any corresponding controller $\mathbf{K}(s)=\mathbf{Y}(s)\mathbf{X}(s)^{-1}$ exhibits the desired sparsity pattern. For sparsity constraints, the approach of SI goes beyond the notion of Quadratic Invariance (QI): 1) the SI approach always yields a convex restriction; 2) the solution via the SI approach is guaranteed to be globally optimal when QI holds and performs at least as well as considering a nearest QI subset. Moreover, the notion of SI naturally applies to designing structured static controllers, while QI is not utilizable. Numerical examples show that even for non-QI cases, SI can recover solutions that are 1) globally optimal and 2) strictly more performing than previous methods.

\end{abstract}

\section{Introduction}
The safe and efficient operation of several large-scale systems, such as the smart grid \cite{dorfler2014sparsity}, biological networks \cite{prescott2014layered}, and automated highways \cite{zheng2017distributed}, relies on the decision making of multiple interacting agents. Coordinating the decisions of these agents is challenged by a lack of complete information of the systems' internal variables. Such limited information arises due to privacy concerns, geographic distance or the challenges of implementing a reliable communication network. 

The celebrated work~\cite{Witsenhausen} 
highlighted that lacking full information can enormously complicate
the design of optimal control inputs. Indeed,
the optimal feedback control policies may not even be linear for the Linear Quadratic Gaussian (LQG) control problem without full output information. The intractability inherent to lack of
full information was investigated in the works \cite{Survey, papadimitriou1986intractable}. The core challenges discussed therein
motivated identifying special cases of optimal control problems with partial information for which
efficient algorithms can be used.

Optimally controlling a linear time-invariant system (LTI) with distributed sensor measurements amounts to computing a linear controller that has a desired sparsity pattern and minimizes a norm of the closed-loop system. For this generally intractable problem, the notion of Quadratic Invariance (QI) was shown to be sufficient \cite{rotkowitz2006characterization} and necessary \cite{QIconvexity} for an exact convex reformulation. A related problem of sensor-actuator architecture co-design  was addressed in \cite{matni2015communication,matni2016regularization} by exploiting QI and using sparsity-inducing norm penalties.
\subsection{Previous work on non-QI cases} Given the importance and intricacy of computing optimal distributed controllers, a variety of approximation methods have been proposed for general systems and information structures that are not QI.  For example, the authors in \cite{SDP} developed semidefinite programs 
that are relaxations of this generally NP-hard problem. However, these relaxations might fail to recover a sparse controller that is stabilizing, as confirmed experimentally in \cite{wang2018convex}. To address this issue, polynomial optimization has been used in \cite{wang2018convex} to obtain a
sequence of convex relaxations which converges to a stabilizing distributed controller. Nevertheless, performance of the recovered solution is not directly addressed in \cite{wang2018convex}. For the finite-horizon control problem, the authors in \cite{dvijotham2015convex} derived convex upper bounds to the non-convex cost function to obtain conservative feasible solutions. However, the theoretical sub-optimality bounds were shown to be loose.  Alternatively, the system level approach \cite{wang2019system} proposed an implementation where controllers are required to share locally estimated disturbances in the state-feedback case and internal controller states in the output-feedback case. We note that the classical distributed control only requires to share output measurements, but no intermediate computations, among subsystems. The need to share this additional information in~\cite{wang2019system}  might raise concerns of system security and vulnerability in safety critical applications \cite{sridhar2012cyber}, where each subsystem can only rely on its own sensor measurements.


A different approach to sparse output-feedback controller synthesis is to develop a \emph{convex restriction}: the unstructured problem is reformulated as an equivalent convex program and convex constraints are added to guarantee the desired sparsity pattern of the recovered controllers. Convex restrictions exhibit specific advantages: 1) their optimal solutions can be readily computed with standard convex optimization techniques, and 2) all their feasible solutions are structured and stabilizing by design. A disadvantage is that a restriction may be infeasible even when the original problem is feasible. This motivates developing convex restrictions that are as tight as possible for improved feasibility and performance.  In the literature, convex restrictions have mostly been developed for the special case of computing static controllers \cite{conte2012distributed,zheng2017convex, furieri2019separable}. Within this setting, the problem of optimal sensor and actuator selection was addressed in \cite{lin2013design,dhingra2014admm} with an ADMM approach. For the general case of dynamic controllers given non-QI information structures, the work \cite{rotkowitz2012nearest} suggested restricting the desired sparsity pattern to a subset that is QI to obtain upper bounds on the minimum cost. However, to the best of the authors' knowledge, a method for convex restrictions that can outperform \cite{rotkowitz2012nearest} and goes beyond the notion of QI for sparsity constraints is not known.

\subsection{Contributions}
This paper proposes a generalized framework for the convex design of optimal and near-optimal LTI dynamic output-feedback controllers with a pre-determined sparsity pattern. Our underlying idea is to identify appropriate sparsity patterns for two transfer matrices $\mathbf{Y}(s)$ and $\mathbf{X}(s)$ such that any corresponding feedback controller in the form $\mathbf{K}(s)=\mathbf{Y}(s)\mathbf{X}(s)^{-1}$ exhibits the desired structure. This fundamental property is denoted as Sparsity Invariance (SI). 

Our first contribution is to develop  algebraic conditions on the binary matrices associated with the sparsities of $\mathbf{Y}(s)$ and $\mathbf{X}(s)$ that are necessary and sufficient for SI. Among all such sparsities, we suggest a polynomial-time algorithm to design sparsities that lead to better performance for the distributed control problem at hand. Second, we show that the SI notion steps beyond that of QI in several ways. Indeed,  SI can be applied to general systems subject to arbitrary sparsity constraints, regardless of whether QI holds. Furthermore, SI recovers a controller that is provably globally optimal when QI holds and performs at least as well as that obtained by considering a nearest  QI sparsity subset \cite{rotkowitz2012nearest} when QI does not hold.  Third, we provide examples to show that, even if QI does not hold, controllers obtained through the SI approach can be 1) globally optimal and 2) in general strictly more performing than those obtained using the nearest QI subset approach of \cite{rotkowitz2012nearest}. Finally, we remark that the SI concept is  
applicable to distributed static controller design, as studied in our preliminary work \cite{furieri2019separable}, whereas the Youla parametrization and thus the QI notion is not utilizable. For brevity, our theoretical discussion  focuses on continuous-time systems, but our results also naturally hold for discrete-time systems with sparsity constraints, as we will discuss in the numerical results.  

The rest of this paper is structured as follows. Section~\ref{se:preliminaries} states necessary background and presents the problem formulation. 
Section~\ref{se:sparsity} introduces the class of convex restrictions under investigation and fully characterizes our notion of Sparsity Invariance (SI). We describe how SI can be utilized in an optimized way. In Section~\ref{se:SparsityInvarianceQI}, we show that 1) SI encompasses the previous approaches based on the QI notion, and 2) that strictly better performing sparse controllers can be computed efficiently with the SI approach. We present numerical results in Section~\ref{se:numerical} and conclude the paper in Section~\ref{se:conclusion}.

\section{Background and Problem Statement}
\label{se:preliminaries}

Here, we first introduce some notation on sparsity structures and transfer functions. Then, we state the problem of distributed optimal control, and introduce the necessary background on the Youla parametrization of internally stabilizing controllers.

\subsection{Notation and sparsity structures}	
We use $\mathbb{R},\, \mathbb{C}$ and $\mathbb{N}$ to denote real numbers, complex numbers and positive integers, respectively. The $(i,j)$-th element in a matrix $Y \in \mathbb{R}^{m \times n}$ is referred to as $Y_{ij}$. We use $I_n$ to denote the identity matrix of size $n \times n$,  $0_{m \times n}$ to denote the zero matrix of size $m \times n$ and $1_{m \times n}$ to denote the matrix of size $m \times n$ with all entries set to $1$. 

\emph{Transfer functions:}
    		We denote the imaginary axis as $
    j\mathbb{R}:=\{z \in \mathbb{C} \mid \Re(z)=0\}$ and consider continuous-time transfer functions $\mathbf{f}:j\mathbb{R} \rightarrow \mathbb{C}$. A $m \times n$ \emph{transfer matrix} is the set of $m \times n$ matrices whose entries are transfer functions. We denote the set of $m \times n$ causal transfer matrices as $\mathcal{R}_c^{m\times n}$. A transfer function is called {\emph{proper} (resp. \emph{strictly-proper})} if it is rational and the degree of the numerator polynomial does not exceed {(resp. is strictly lower than)} the degree of the denominator polynomial.  Similar to~\cite{rotkowitz2006characterization}, we denote by $\mathcal{R}_{sp}^{m \times n}$ the set of $m \times n$ strictly proper transfer matrices. Finally, we let $\mathcal{RH}_\infty^{m \times n}$ be the set of $m \times n$ causal and stable transfer matrices.
	
	
	Sparsity structures of transfer matrices can be conveniently represented by binary matrices. A binary matrix is a matrix with entries from the set $\{0,1\}$, and we use $\{0,1\}^{m \times n}$ to denote the set of $m \times n$ binary matrices.  
Given a binary matrix $X \in \{0,1\}^{m \times n}$, we define the associated \emph{sparsity subspace} of causal transfer matrices as
	\begin{align*}
\text{Sparse}(X):=\{&\mathbf{Y}\in  \mathcal{R}_c^{m \times n} \mid  \mathbf{Y}_{ij}(j\omega )=0 ~~\text{for all } i,j \\
&~\text{such that  } X_{ij}=0 \; \text{for almost all $\omega \in \mathbb{R}$} \}\,.
	\end{align*}
 Similarly, given a transfer function $\mathbf{Y} \in \mathcal{R}_c^{m \times n}$, we define $X = \text{Struct}(\mathbf{Y})$ as the binary matrix given by
$$
    X_{ij}:=\begin{cases}
	   0 & \text{if}\; \mathbf{Y}_{ij}(j \omega) = 0\; \text{for almost all}\; \omega \in \mathbb{R},\\
	   1 & \text{otherwise}\,.
	\end{cases}
$$
We say that the transfer matrix $\mathbf{X} \in \mathcal{R}_c^{n \times n}$ is invertible if $\mathbf{X}(j \omega) \in \mathbb{C}^{n \times n}$ is invertible for almost all $\omega \in \mathbb{R}$.

  Let $X, \hat{X} \in \{0,1\}^{m \times n}$ and $Z \in \{0,1\}^{n \times p}$ 
be binary matrices. Throughout the paper, we adopt the following conventions: $X + \hat{X} := \text{Struct}(X + \hat{X})$, and $ XZ:=\text{Struct}(XZ)$.
We say $X \leq \hat{X}$ if and only if $X_{ij}\leq \hat{X}_{ij}\;\forall i,j$, and $X < \hat{X}$ if and only if $X \leq \hat{X}$ and there exist indices $i,j$ such that $X_{ij}<\hat{X}_{ij}$. {Also, we denote} $X \nleq \hat{X}$ if and only if there exist indices $i,j$ such that $X_{ij}>\hat{X}_{ij}$. Given a binary matrix $ X \in \{0,1\}^{m \times n}$ we denote its cardinality, \emph{i.e.}, the total number of nonzero entries, as
	 \begin{equation*}
	  \|X\|_0:=\sum_{i=1}^m \sum_{j=1}^n X_{ij}. 
	 \end{equation*}
Considering the following binary matrices 
\begin{equation*}
        X_1 = \begin{bmatrix}
        0 & 1 & 0 \\ 1 & 1 & 1
    \end{bmatrix}\,, \quad {X}_2 = \begin{bmatrix}
        0 & 1 & 0 \\ 1 & 0 & 1
    \end{bmatrix}\,, \quad {X}_3 = \begin{bmatrix}
        1 & 1 & 0 \\ 1 & 0 & 1
    \end{bmatrix}\,,
\end{equation*} 
we have ${X}_2 < X_1, {X}_3 \nleq X_1$ and ${X}_2 + X_1 = X_1$. Their cardinalities are $\|X_1\|_0 = 4, \|X_2\|_0 = 3$ and $ \|X_3\|_0 = 4$, respectively. For the following transfer matrix,
$$
    \mathbf{Y} = \begin{bmatrix} 0 & \frac{1}{s + 1} & 0 \\ \frac{1}{s + 1} & \frac{1}{s + 1} & \frac{1}{s + 1} \end{bmatrix} \in \mathcal{R}\mathcal{H}_\infty^{2 \times 3}\,,
$$
 if we consider the binary matrix $X_1$ in the example above, we have $\mathbf{Y} \in \text{Sparse}(X_1)$ and $X_1 = \text{Struct}(\mathbf{Y})$.


	\subsection{Problem statement}\label{se:problems}

	
	
	We consider LTI systems in continuous-time
	\begin{align}
	\label{eq:sys_cont}
	&\dot{x}(t)=Ax(t)+Bu(t)+ H_xw(t)\,,\\
	&y(t)=C_yx(t)+H_yw(t) \nonumber\,,\\
	&z(t)=C_zx(t)+D_zu(t)+H_zw(t) \nonumber\,,
	\end{align}
	where $x(t) \in \mathbb{R}^n$, $u(t) \in \mathbb{R}^m$, $y(t) \in \mathbb{R}^{p}$,  $z(t) \in \mathbb{R}^{q}$, and $w(t) \in \mathbb{R}^r$ are the state, control input, observed output, a performance signal defined based on our control objectives, and additive disturbance at time $t \in \mathbb{R}$, respectively. The  input-output transfer function representation {for~\eqref{eq:sys_cont} can be  written as }
\begin{equation*}
\begin{bmatrix}
z\\y
\end{bmatrix}=\mathbf{P}\begin{bmatrix}
w\\u
\end{bmatrix}=\begin{bmatrix}
\mathbf{P}_{11}&\mathbf{P}_{12}\\\mathbf{P}_{21}&\mathbf{G}
\end{bmatrix}\begin{bmatrix}
w\\u
\end{bmatrix}\,,
\end{equation*}
with
$$
    \begin{aligned}
        \mathbf{P}_{11}&:=C_z(s I_n-A)^{-1}H_x+H_z, \\
         \mathbf{P}_{12} &:=C_z(sI_n-A)^{-1}B +D_z, \\
         \mathbf{P}_{21} &:=C_y(s I_n-A)^{-1}H_x+H_y, \\
         \mathbf{G}&:=C_y(s I_n-A)^{-1}B,
    \end{aligned}
$$
where $s$ belongs to $j \mathbb{R}$. Notice that $\mathbf{P}_{11},\mathbf{P}_{12},\mathbf{P}_{21}$ are  proper transfer functions and $\mathbf{G}$ is strictly proper. 

Consider the interconnection of Figure~\ref{fig:interconnection}. 
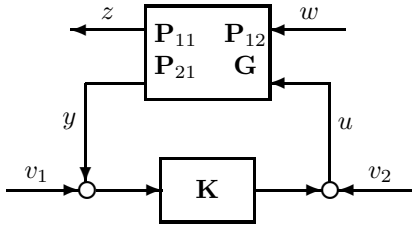
\begin{figure}[t]
\begin{center}\setlength{\unitlength}{0.008in}
\begin{picture}(243,135)(140,410)
\thicklines
\put(187,440){\vector(1, 0){ 43}}
\put(180,510){\vector( 0, -1){ 65}}
\put(128,440){\vector( 1, 0){ 48}}
\put(140,445){\makebox(0,0)[lb]{$v_1$}}
\put(181,440){\circle{10}}
\put(180,510){\line( 1, 0){ 40}}

\put(340,440){\circle{10}}
\put(340,510){\vector( -1, 0){ 40}}
\put(340,510){\line( 0,-1){ 65}}
\put(290,440){\vector(1, 0){ 45}}
\put(393,440){\vector(-1, 0){ 48}}
\put(365,445){\makebox(0,0)[lb]{$v_2$}}

\put(220,545){\vector( -1, 0){ 50}}
\put(350,545){\vector( -1, 0){ 50}}
\put(230,420){\framebox(60,40){}}
\put(220,500){\framebox(80,60){}}
\put(165,480){\makebox(0,0)[lb]{$y$}}
\put(345,480){\makebox(0,0)[lb]{$u$}}
\put(320,552){\makebox(0,0)[lb]{$w$}}
\put(190,552){\makebox(0,0)[lb]{$z$}}
\put(252,435){\makebox(0,0)[lb]{$\mathbf{K}$}}
\put(225,510){\makebox(0,0)[lb]{$\begin{matrix}\mathbf{P}_{11}&\mathbf{P}_{12}\\\mathbf{P}_{21}&\mathbf{G}\end{matrix}$}}
\end{picture}\end{center}
\caption{\footnotesize Interconnection of $\mathbf{P}$ and $\mathbf{K}$.}
\label{fig:interconnection}
\end{figure}
A dynamic output-feedback controller $u=\mathbf{K}y$ with $\mathbf{K} \in \mathcal{R}_c^{m \times p}$ is said to be \emph{internally stabilizing} if and only if the nine transfer matrices from $w,~\nu_1,~\nu_2$ to $z,~y,~u$ are stable. We denote the set of all causal LTI internally stabilizing output-feedback controllers as $\mathcal{C}_{\text{stab}}$. We say that $\mathbf{P}$ is stabilizable if only and if $\mathcal{C}_{\text{stab}}\neq \emptyset$ and any $\mathbf{K} \in \mathcal{C}_{\text{stab}}$ stabilizes $\mathbf{P}$. Furthermore, we say that a controller $\mathbf{K}$ stabilizes $\mathbf{G}$  if and only if the four transfer matrices from $\nu_1,~\nu_2$ to $y,~u$ are all stable. For the rest of the paper we make the following assumption.

\emph{Assumption 1:} The system $\mathbf{P}$ is stabilizable.

A test for stabilizability of $\mathbf{P}$ is offered in \cite[Chapter 4]{francis1987course}. It is well-known \cite[Chapter 4]{francis1987course}, \cite{rotkowitz2006characterization} that under Assumption~1 a controller $\mathbf{K}$ stabilizes $\mathbf{P}$ if and only if it stabilizes $\mathbf{G}$.  The control problem is to compute a dynamic output-feedback controller $\mathbf{K} \in \mathcal{C}_{\text{stab}}$ which minimizes a given norm $\|\cdot\|$ of
\begin{equation}
f(\mathbf{K})=\mathbf{P}_{11} +\mathbf{P}_{12}\mathbf{K}(I_p-\mathbf{GK})^{-1}\mathbf{P}_{21}\,, \label{eq:closed_loop_cost}
\end{equation}
which is the closed-loop transfer function from $w$ to $z$.

 In distributed control, it is common to add the requirement that $\mathbf{K}$ only uses partial output measurements. This requirement can be captured by adding the constraint $\mathbf{K} \in \text{Sparse}(S)$ for a given binary matrix $S \in \{0,1\}^{m \times p}$, where $S_{ij}=0$ encodes the fact that the $i$-th scalar control input cannot measure the $j$-th measurement output. We formulate this distributed, sparsity-constrained control problem as follows~\cite{rotkowitz2006characterization}:
	\begin{empheq}[box=\fbox]{alignat=3}
	   & \textbf{Problem }\mathcal{P}_K    \nonumber\\
	   & \minimize_{\mathbf{K} \in \mathcal{C}_{\text{stab}} } \qquad && \|f(\mathbf{K})\| \nonumber\\
	   & \st &&\mathbf{K} \in  \text{Sparse}(S)\,, \nonumber
	\end{empheq}	
where $\|\cdot\|$ is any norm of interest. It was shown that a necessary and sufficient condition for a feasible solution to $\mathcal{P}_K$ to exist is that all the distributed fixed modes associated with $S$ lie in the left half of the complex plane \cite{alavian2014stabilizing}. Even if $\mathcal{P}_K$ is feasible, directly computing its optimal solution is intractable because the set $\mathcal{C}_{\text{stab}}$ is non-convex in general. This can be easily verified by checking that when $\mathbf{K}_1,\mathbf{K}_2 \in \mathcal{C}_{\text{stab}}$, the controller $\mathbf{K}=\frac{1}{2}(\mathbf{K}_1+\mathbf{K}_2)$ does not lie in $\mathcal{C}_{\text{stab}}$ in general. Furthermore, the cost function $\|f(\mathbf{K})\|$ is non-convex in $\mathbf{K}$. 

	\subsection{The Youla parametrization of stabilizing controller}
 	 The first step  to convexify problem $\mathcal{P}_K$  is to derive a convex formulation of  the set $\mathcal{C}_{\text{stab}}$ and the function $f(\mathbf{K})$. This is achieved by using a \emph{doubly coprime factorization} of $\mathbf{G}$. 

\begin{lemma}[Chapter~4 of \cite{francis1987course}]
\label{le:doubly}
For any  $\mathbf{G} \in \mathcal{R}_{sp}^{p \times m}$, there exist eight proper and stable transfer matrices defining a doubly coprime factorization of $\mathbf{G}$, that is, they satisfy
\begin{align}
&\mathbf{G}=\mathbf{N}_r\mathbf{M}_r^{-1}=\mathbf{M}_l^{-1}\mathbf{N}_l\,,\nonumber\\
&\begin{bmatrix}
\mathbf{U}_l&-\mathbf{V}_l\\
-\mathbf{N}_l&\mathbf{M}_l
\end{bmatrix}\begin{bmatrix}
\mathbf{M}_r&\mathbf{V}_r\\
\mathbf{N}_r&\mathbf{U}_r
\end{bmatrix}=I_{m+p}\,.\label{eq:dp2}
\end{align}
\end{lemma}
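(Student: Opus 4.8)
The plan is to construct the eight transfer matrices explicitly from a state-space realization of $\mathbf{G}$ and then verify both the factorization identities $\mathbf{G}=\mathbf{N}_r\mathbf{M}_r^{-1}=\mathbf{M}_l^{-1}\mathbf{N}_l$ and the Bezout-type identity~\eqref{eq:dp2} by direct state-space algebra. Since $\mathbf{G}\in\mathcal{R}_{sp}^{p\times m}$ is rational and strictly proper, it admits a minimal realization $\mathbf{G}=C(sI_n-A)^{-1}B$; minimality guarantees that the pair $(A,B)$ is stabilizable and the pair $(A,C)$ is detectable. Consequently one can select a state-feedback gain $F$ such that $A+BF$ is Hurwitz and an output-injection gain $L$ such that $A+LC$ is Hurwitz. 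These two stabilizing gains are the only nontrivial ingredients; everything else is algebra.

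Next I would define the factors through the standard state-space formulas built from $F$ and $L$, for instance
\begin{align*}
\mathbf{M}_r&=I_m+F(sI_n-A-BF)^{-1}B, & \mathbf{N}_r&=C(sI_n-A-BF)^{-1}B,\\
\mathbf{M}_l&=I_p+C(sI_n-A-LC)^{-1}L, & \mathbf{N}_l&=C(sI_n-A-LC)^{-1}B,
\end{align*}
together with analogous expressions for $\mathbf{U}_l,\mathbf{V}_l,\mathbf{U}_r,\mathbf{V}_r$ assembled from the same gains (e.g. $\mathbf{U}_r=I_m-F(sI_n-A-LC)^{-1}B$ and $\mathbf{U}_l=I_p-C(sI_n-A-BF)^{-1}L$). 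Because $A+BF$ and $A+LC$ are Hurwitz and each factor is $I$ plus a strictly proper term, all eight transfer matrices are automatically proper and stable, which is the first claim of the lemma.

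I would then verify the two coprime factorizations. For the right one, $\mathbf{M}_r$ is invertible since it equals $I_m$ plus a strictly proper transfer matrix, and a direct computation using the matrix-inversion identity $\mathbf{M}_r^{-1}=I_m-F(sI_n-A)^{-1}B$ yields $\mathbf{N}_r\mathbf{M}_r^{-1}=C(sI_n-A)^{-1}B=\mathbf{G}$; the left factorization $\mathbf{M}_l^{-1}\mathbf{N}_l=\mathbf{G}$ follows dually. These steps are routine cascade computations that reduce to cancellations in the realization.

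The main obstacle is establishing the Bezout identity~\eqref{eq:dp2}. To prove it, I would form the product of the two $2\times2$ block transfer matrices, obtaining a combined realization of state dimension $2n$, and then apply a similarity transformation of the form $\begin{bmatrix}I_n&0\\I_n&I_n\end{bmatrix}$ (or its inverse) to bring the combined dynamics matrix into block-triangular form. The transformed realization exposes the cancellation: the dynamic part becomes simultaneously unobservable and uncontrollable, so the entire product collapses to its constant feedthrough term, which is exactly $I_{m+p}$. This block-manipulation is the technical heart of the argument and where the choices of $F$ and $L$ conspire to annihilate the dynamics. As the result is classical, an acceptable alternative is simply to invoke the explicit construction in Chapter~4 of~\cite{francis1987course}.
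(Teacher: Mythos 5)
The paper offers no proof of this lemma---it is quoted directly from Chapter~4 of \cite{francis1987course}---and your state-space construction via stabilizing gains $F$ and $L$, followed by the similarity transformation that collapses the $2n$-dimensional realization of the block product to its constant feedthrough $I_{m+p}$, is exactly the classical argument given there, so your approach is correct and essentially the same as the one the paper invokes. One small caution: in the paper's block arrangement $\mathbf{U}_l$ must be $m\times m$ and $\mathbf{U}_r$ must be $p\times p$ (they multiply $\mathbf{M}_r$ and pair with $\mathbf{M}_l$, respectively, in \eqref{eq:dp2}), so your sample formulas $\mathbf{U}_r=I_m-F(sI_n-A-LC)^{-1}B$ and $\mathbf{U}_l=I_p-C(sI_n-A-BF)^{-1}L$ have the roles and dimensions of the left and right auxiliary factors interchanged relative to this convention and would need to be relabeled before checking \eqref{eq:dp2} blockwise.
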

\vspace{2mm}

Then, the Youla parametrization  of all internally stabilizing controllers \cite{youla1976modern} establishes the following equivalence \cite[Chapter~4]{francis1987course}:
\begin{equation}
\label{eq:Cstab}
\mathcal{C}_{\text{stab}}=\{(\mathbf{V}_r-\mathbf{M}_r\mathbf{Q})(\mathbf{U}_r-\mathbf{N}_r\mathbf{Q})^{-1}|~\mathbf{Q} \in \mathcal{RH}_\infty^{m \times p}\}\footnote{Equivalently, $\mathcal{C}_{\text{stab}}=\{(\mathbf{U}_l-\mathbf{QN}_l)^{-1}(\mathbf{V}_l-\mathbf{QM}_l)|~\mathbf{Q} \in \mathcal{RH}_\infty^{m \times p}\}$.}\,.
\end{equation}
%
%
Furthermore, it was proved in \cite[Chapter 4]{francis1987course} that the set of all closed-loop transfer functions from $w$ to $z$ achievable by $\mathbf{K} \in \mathcal{C}_{\text{stab}}$ is
\begin{align*}
f(\mathcal{C}_{\text{{stab}}})=\{\mathbf{T}_1-\mathbf{T}_2\mathbf{Q}\mathbf{T}_3\rvert ~\mathbf{Q} \in \mathcal{RH}_\infty^{m \times p} \}\,,
\end{align*}
where $f(\cdot)$ is defined in (\ref{eq:closed_loop_cost}) and $\mathbf{T}_1=\mathbf{P}_{11}+\mathbf{P}_{12}\mathbf{V}_r\mathbf{M}_l\mathbf{P}_{21}$, $\mathbf{T}_2=\mathbf{P}_{12}\mathbf{M}_r$ and $\mathbf{T}_3=\mathbf{M}_l\mathbf{P}_{21}$.
    To facilitate our problem formulation, we define
\begin{align}
&\mathbf{Y}_Q=(\mathbf{V}_r-\mathbf{M}_r\mathbf{Q})\mathbf{M}_l\,, \label{eq:Ydoublycoprime}\\
&\mathbf{X}_Q=(\mathbf{U}_r-\mathbf{N}_r\mathbf{Q})\mathbf{M}_l \label{eq:Xdoublycoprime}\,.
\end{align}
  It directly follows from \eqref{eq:Cstab} that
    \begin{equation}
  \label{eq:Cstab_YX}
\mathcal{C}_{\text{stab}}=\{\mathbf{Y}_Q\mathbf{X}_Q^{-1}\rvert\;\eqref{eq:Ydoublycoprime},\eqref{eq:Xdoublycoprime},~\mathbf{Q} \in \mathcal{RH}_\infty^{m \times p}\}\,.
\end{equation}
We notice that (\ref{eq:dp2}) implies $\mathbf{U}_r=\mathbf{M}_l^{-1}+\mathbf{GV}_r$ and (\ref{eq:Ydoublycoprime}) implies $\mathbf{V}_r\mathbf{M}_l=\mathbf{Y}_Q+\mathbf{M}_r\mathbf{Q}\mathbf{M}_l$. Hence, we have 
\begin{align}
\mathbf{X}_Q&=(\mathbf{M}_l^{-1}+\mathbf{GV}_r-\mathbf{N}_r\mathbf{Q})\mathbf{M}_l \nonumber\\
&=I_p+\mathbf{G}(\mathbf{Y}_Q+\mathbf{M}_r\mathbf{QM}_l)-\mathbf{N}_r\mathbf{QM}_l\nonumber\\
&=I_p+\mathbf{GY}_Q\,. \label{eq:XisIplusGY}
\end{align}


Now we can equivalently reformulate $\mathcal{P}_K$ into the following optimization problem.
\begin{alignat*}{3}
&\textbf{Problem }\mathcal{P}_{Q}\\
	 &\minimize_{\mathbf{Q} \in \mathcal{RH}_\infty^{m \times p}} ~~&& \|\mathbf{T}_1-\mathbf{T}_2\mathbf{QT}_3\|\\
	&\st~&&(\ref{eq:Ydoublycoprime}),~~ (\ref{eq:Xdoublycoprime}),~~ \mathbf{Y}_Q \mathbf{X}_Q^{-1} \in \text{Sparse}(S)\,.
\end{alignat*}	
Without the sparsity constraint $\text{Sparse}(S)$,  problem $\mathcal{P}_{Q}$ would be convex, as (\ref{eq:Ydoublycoprime}), (\ref{eq:Xdoublycoprime}) and the cost function are affine in $\mathbf{Q}$. The primary source of non-convexity is  the requirement that $\mathbf{Y}_Q\mathbf{X}_Q^{-1} \in \text{Sparse}(S)$. We conclude that the complexity of distributed control is ultimately linked to  the non-convex sparsity requirement on the Youla parameter. 
%


\section{Sparsity Invariance}
\label{se:sparsity}

One approach to remove the non-convex sparsity requirement on the Youla parameter is as follows:  replace 
$\mathbf{Y}_Q\mathbf{X}_Q^{-1} \in \text{Sparse}(S)$ with the convex constraint that $\mathbf{Y}_Q$ and $\mathbf{X}_Q$ comply with appropriate sparsity patterns, in a way such that $\mathbf{Y}_Q\mathbf{X}_Q^{-1}$ is guaranteed to lie in $\text{Sparse}(S)$. In other words, we restrict our attention to  distributed sparse controllers $\mathbf{K} \in \text{Sparse}(S)$ defined as the product of two structured matrix factors. We note that related ideas appeared for the specific case of row-column sparsities (e.g. \cite{matni2016regularization, dhingra2014admm}), but the case of arbitrary sparsities was not addressed.

Following the general idea above, in this paper we investigate a  notion of Sparsity Invariance (SI) for convex design of sparse controllers. As will be thoroughly discussed in Section~\ref{se:SparsityInvarianceQI}, SI leads to the largest known class of convex restrictions of $\mathcal{P}_K$ for general systems subject to sparsity constraints on the controller.


\begin{definition}[Sparsity Invariance (SI)]
    Given a binary matrix $S$, the pair of binary matrices $T, R$ satisfies a property of sparsity invariance (SI) with respect to $S$ if
\begin{align}
\label{eq:sparsity_invariance}
&\mathbf{Y} \in \text{Sparse}(T) \text{ and }\mathbf{X} \in \text{Sparse}(R)\nonumber\\
&\qquad \qquad \qquad \quad \Downarrow\\
& \qquad \quad~~\mathbf{YX}^{-1} \in \text{Sparse}(S). \nonumber
\end{align}
\end{definition}

Motivated by the SI property, consider the following convex problem:
	\begin{empheq}[box=\fbox]{alignat*=3}
	&\textbf{Problem }&&\mathcal{P}_{T,R}~~ \nonumber\\
	&\minimize_{\mathbf{Q}\in \mathcal{RH}_\infty^{m \times p}} &&~\|\mathbf{T}_1-\mathbf{T}_2\mathbf{QT}_3\| \nonumber\\
 &\st && ~(\ref{eq:Ydoublycoprime}),~  (\ref{eq:Xdoublycoprime})\,,\\
 &~&& ~\mathbf{Y}_Q \mathbf{\Gamma} \in \text{Sparse}(T), ~ \mathbf{X}_Q \mathbf{\Gamma} \in \text{Sparse}(R)\,,
	\end{empheq}
where $T \in \{0,1\}^{m \times p}, R \in \{0,1\}^{p \times p}$ and $\mathbf{\Gamma} \in \mathcal{R}_c^{p \times p}$, with $\mathbf{\Gamma}$ invertible, are parameters to be designed before performing the optimization.  For simplicity,  one could select $\mathbf{\Gamma}=I_p$, but we illustrate in Example~\ref{ex:1} of Section~\ref{se:SparsityInvarianceQI} that there are cases where a different choice of $\mathbf{\Gamma}$ might lead to improved and even globally-optimal performance for non-QI problems. 
For any choice of $T,~R$ and $\mathbf{\Gamma}$, the above program is convex. One fundamental question is when its feasible solutions  lead to stabilizing controllers $\mathbf{K}=(\mathbf{Y}_Q\mathbf{\Gamma})(\mathbf{X}_Q\mathbf{\Gamma})^{-1}=\mathbf{Y}_Q\mathbf{X}_Q^{-1}$ lying in the desired sparsity subspace $\text{Sparse}(S)$. The notion of SI (\ref{eq:sparsity_invariance}) defined above is a mathematical expression of this requirement. 

In the next subsection we establish necessary and sufficient conditions on the binary matrices $T$ and $R$ to satisfy the SI property (\ref{eq:sparsity_invariance}).

\begin{remark}
 \emph{Note that the notion of SI is an algebraic requirement for binary matrices $R$ and $T$, given a binary matrix $S$. This is independent of the parameterization of internally stabilizing controllers. In addition to the Youla parameterization, 
we recently observed that the SI idea \eqref{eq:sparsity_invariance} is equivalently applicable within the  system-level \cite{wang2019system} (SLP) and input-output \cite{furieri2019input} (IOP) parameterizations, in both continuous- and discrete-time. We refer to \cite[Remark 4]{zheng2019equivalence} for details. For brevity, in this paper we will develop our theoretical results within the Youla parameterization, and note that they can be straightforwardly applied to the SLP and the IOP.}
\end{remark}

\begin{remark}
\emph{We assume that $R\geq I_p$. Since $\mathbf{X}_Q=I_p+\mathbf{GY}_Q\in \text{Sparse}(R)$ and $\mathbf{G}$ is strictly proper, the assumption is without loss of generality for $\mathbf{\Gamma}=I_p$. For convenience, in the definition of problem $\mathcal{P}_{T,R}$ we do not indicate $\mathbf{\Gamma}$ explicitly as a parameter. This is because the SI property (\ref{eq:sparsity_invariance})  only depends on the binary matrices $T$ and $R$.}
\end{remark}


\subsection{Characterization of SI}
One immediate idea in designing the binary matrices $T$ and $R$ to guarantee $\mathbf{K}=(\mathbf{Y}_\mathbf{Q}\mathbf{\Gamma})(\mathbf{X}_\mathbf{Q} \mathbf{\Gamma})^{-1}=\mathbf{Y}_Q\mathbf{X}_Q^{-1} \in \text{Sparse}(S)$ is to simply select $T=S$ and $R=I_p$ similar to  \cite{conte2012distributed, geromel1994decentralized, zheng2017convex}. However, many other choices are available that lead to improved convex restrictions. 

The next Theorem provides a full characterization of  the SI property (\ref{eq:sparsity_invariance}) in terms of the binary matrices $T$ and $R$.    

\begin{theorem}
\label{th:sparsity_invariance}
Let $T \in \{0,1\}^{m \times p}$ and $R \in \{0,1\}^{p \times p}$ be such that $R\geq I_p$. The following two statements are equivalent:
\begin{enumerate}
\item $T\leq S$ and $TR^{p-1} \leq S$.
\item SI as per (\ref{eq:sparsity_invariance}) holds.
\end{enumerate}
\end{theorem}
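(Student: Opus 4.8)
The plan is to reduce both implications to a single structural fact about the inverse of a patterned transfer matrix: for every invertible $\mathbf{X}\in\text{Sparse}(R)$ with $R\geq I_p$ one has $\text{Struct}(\mathbf{X}^{-1})\leq R^{p-1}$, and this bound is attainable. Here $R^{p-1}$ should be read as the reachability (transitive-closure) pattern of the directed graph with an edge $j\to i$ whenever $R_{ij}=1$; because $R\geq I_p$ every vertex carries a self-loop, so $R^{k}\geq R^{k-1}$ for all $k$, the pattern stabilizes at $R^{p-1}$, and in particular $R^{p-1}\geq I_p$. I will also record as elementary the monotonicity and submultiplicativity of $\text{Struct}(\cdot)$ under products: $A\leq A'$ and $B\leq B'$ imply $\text{Struct}(AB)\leq\text{Struct}(A'B')$, and $\text{Struct}(\mathbf{A}\mathbf{B})\leq\text{Struct}\big(\text{Struct}(\mathbf{A})\,\text{Struct}(\mathbf{B})\big)$ for transfer matrices, the latter because the $(i,j)$ entry of a product vanishes for almost all $\omega$ as soon as every cross term does.

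To prove the inverse-sparsity lemma I would use the adjugate formula $(\mathbf{X}^{-1})_{ij}=(-1)^{i+j}\det \mathbf{X}[\hat{\jmath},\hat{\imath}]/\det\mathbf{X}$, where $\mathbf{X}[\hat{\jmath},\hat{\imath}]$ deletes row $j$ and column $i$. Each entry is rational in $s=j\omega$, hence either identically zero or nonzero for almost all $\omega$; so $(\mathbf{X}^{-1})_{ij}\not\equiv 0$ forces the minor to be not identically zero, which in turn forces at least one Leibniz term $\prod_{r\neq j}\mathbf{X}_{r,\phi(r)}$ to have all factors nonzero, i.e.\ $R_{r,\phi(r)}=1$ for every $r\neq j$. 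Reading the bijection $\phi:\{1,\dots,p\}\setminus\{j\}\to\{1,\dots,p\}\setminus\{i\}$ as a directed graph of edges $r\to\phi(r)$, a degree count (vertex $i$ is the unique source, vertex $j$ the unique sink) shows this graph is a disjoint union of cycles together with a single directed path from $i$ to $j$ whose edges all lie in $R$. That path has length $\leq p-1$ and yields $(R^{p-1})_{ij}=1$, so $\text{Struct}(\mathbf{X}^{-1})\leq R^{p-1}$; the diagonal case $i=j$ is immediate since $R^{p-1}\geq I_p$. I expect this combinatorial step---extracting a genuine path in $R$ from a surviving permutation term, while keeping ``nonzero'' in the almost-all-$\omega$ sense---to be the main obstacle.

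With the lemma in hand, sufficiency $(1)\Rightarrow(2)$ is short. For $\mathbf{Y}\in\text{Sparse}(T)$ and $\mathbf{X}\in\text{Sparse}(R)$ I have $\text{Struct}(\mathbf{Y})\leq T$ and $\text{Struct}(\mathbf{X}^{-1})\leq R^{p-1}$, so submultiplicativity and monotonicity give $\text{Struct}(\mathbf{Y}\mathbf{X}^{-1})\leq\text{Struct}(T R^{p-1})=TR^{p-1}\leq S$, i.e.\ $\mathbf{Y}\mathbf{X}^{-1}\in\text{Sparse}(S)$. Note that only $TR^{p-1}\leq S$ is used, and $T\leq S$ comes for free from $T\leq TR^{p-1}$.

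For necessity $(2)\Rightarrow(1)$ I would argue by contraposition, again using $T\leq TR^{p-1}$, so that the negation of (1) reduces to $TR^{p-1}\nleq S$. Fix indices with $(TR^{p-1})_{ij}=1$ but $S_{ij}=0$, and choose $k$ with $T_{ik}=1$ and $(R^{p-1})_{kj}=1$; reachability supplies a simple path $j=p_0,p_1,\dots,p_d=k$ in $R$ with $d\leq p-1$ and $R_{p_{t+1},p_t}=1$ for all $t$. I then exhibit explicit \emph{constant} (hence causal and invertible) matrices: $\mathbf{X}=I_p+N$, where $N$ is the nilpotent adjacency matrix of the path, $N_{p_{t+1},p_t}=1$, so that $\text{Struct}(\mathbf{X})\leq R$ and $(\mathbf{X}^{-1})_{kj}=(-1)^{d}(N^{d})_{kj}=(-1)^{d}\neq 0$; and $\mathbf{Y}=e_i e_k^{\top}\in\text{Sparse}(T)$. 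Then $(\mathbf{Y}\mathbf{X}^{-1})_{ij}=(\mathbf{X}^{-1})_{kj}\neq 0$ while $S_{ij}=0$, so $\mathbf{Y}\mathbf{X}^{-1}\notin\text{Sparse}(S)$, contradicting SI. The same construction covers the degenerate case $k=j$ (take $N=0$, $\mathbf{X}=I_p$), which is precisely where $T\nleq S$.
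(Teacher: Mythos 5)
Your proposal is correct, and it reaches the theorem by a genuinely different route than the paper. The paper also reduces everything to the inverse-sparsity bound $\text{Struct}(\mathbf{X}^{-1})\leq R^{p-1}$, but it proves that bound via Cayley--Hamilton over the commutative ring of causal transfer functions (writing $\mathbf{X}^{-1}$ as a polynomial in $\mathbf{X}$ of degree $p-1$), whereas you use the adjugate formula plus a Leibniz-expansion/path-extraction argument; your version is more self-contained and makes transparent \emph{why} the transitive closure $R^{p-1}$ is the right pattern, at the cost of the degree-count bookkeeping. The larger divergence is in the converse: the paper proves two attainability lemmas --- that some invertible $\mathbf{X}\in\text{Sparse}(R)$ achieves $\text{Struct}(\mathbf{X}^{-1})=R^{p-1}$ exactly (via iterated Sherman--Morrison rank-one updates with generically chosen $\boldsymbol{\alpha}$), and that some $\mathbf{Z}\in\text{Sparse}(T)$ achieves $\text{Struct}(\mathbf{Z}\mathbf{W})=TR$ exactly --- and composes them, while you sidestep both by exhibiting a single explicit constant witness ($\mathbf{X}=I_p+N$ with $N$ the nilpotent adjacency matrix of a simple path realizing $(R^{p-1})_{kj}=1$, and $\mathbf{Y}=e_ie_k^{\mathsf{T}}$) that violates $S_{ij}=0$. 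Your construction is simpler and avoids the genericity conditions on $\boldsymbol{\alpha}$, though the paper's lemmas establish the stronger structural fact that the bound $TR^{p-1}$ is attained entry-for-entry, which has independent interest. Two small points to tidy: your edge-orientation convention flips between ``$j\to i$ whenever $R_{ij}=1$'' and ``edges $r\to\phi(r)$'' (harmless, but worth fixing), and your observation that $T\leq TR^{p-1}$ (so condition (1) collapses to $TR^{p-1}\leq S$) is a nice simplification the paper handles by a separate case.
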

The proof of Theorem~\ref{th:sparsity_invariance} is reported in Appendix~\ref{appsub:theorem}. The relevance of Theorem~\ref{th:sparsity_invariance} to characterizing a class of convex restrictions of $\mathcal{P}_K$ is stated in the following Corollary.

\begin{corollary}
\label{co:SI}
Let $T \in \{0,1\}^{m \times p}$ and $R \in \{0,1\}^{p \times p}$ be such that $R\geq I_p$, $T\leq S$ and $TR^{p-1} \leq S$. Then, problem $\mathcal{P}_{T,R^{p-1}}$ is a convex restriction of $\mathcal{P}_K$ for any invertible transfer matrix $\mathbf{\Gamma} \in \mathcal{R}_c^{p \times p}$.
\end{corollary}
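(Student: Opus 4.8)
The plan is to verify the two defining features of a convex restriction of $\mathcal{P}_K$: (i) $\mathcal{P}_{T,R^{p-1}}$ is a convex program, and (ii) every feasible $\mathbf{Q}$ produces a controller $\mathbf{K}=\mathbf{Y}_Q\mathbf{X}_Q^{-1}$ that is internally stabilizing and lies in $\text{Sparse}(S)$, so that the optimal value of $\mathcal{P}_{T,R^{p-1}}$ upper bounds that of $\mathcal{P}_K$. Part (i) is immediate and was already noted below the definition of $\mathcal{P}_{T,R}$: for fixed binary matrices and fixed invertible $\mathbf{\Gamma}$, the cost $\|\mathbf{T}_1-\mathbf{T}_2\mathbf{Q}\mathbf{T}_3\|$ is a norm of an affine function of $\mathbf{Q}$, while \eqref{eq:Ydoublycoprime}, \eqref{eq:Xdoublycoprime} together with the two sparsity constraints are affine (subspace) constraints on $\mathbf{Q}$. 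The internal stability required in (ii) is also immediate: for any $\mathbf{Q}\in\mathcal{RH}_\infty^{m\times p}$, the Youla characterization \eqref{eq:Cstab_YX} guarantees $\mathbf{K}=\mathbf{Y}_Q\mathbf{X}_Q^{-1}\in\mathcal{C}_{\text{stab}}$. Hence the whole argument reduces to the sparsity claim $\mathbf{K}\in\text{Sparse}(S)$.

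To establish the sparsity claim, I would apply Theorem~\ref{th:sparsity_invariance} not to the pair $(T,R)$ but to the pair $(T,\hat R)$ with $\hat R:=R^{p-1}$, which is precisely the pattern with which $\mathbf{X}_Q\mathbf{\Gamma}$ is forced to comply in $\mathcal{P}_{T,R^{p-1}}$. The three hypotheses of the theorem for this pair are $\hat R\geq I_p$, $T\leq S$, and $T\hat R^{p-1}\leq S$. The first two are direct: $R\geq I_p$ makes every structural power of $R$ carry a unit diagonal, and $T\leq S$ is assumed. The third is where the key algebraic fact enters: since $R\geq I_p$, the Boolean powers are non-decreasing, $R\leq R^2\leq\cdots$, and on $p$ nodes the reflexive-transitive closure is attained at $R^{p-1}$, so the sequence saturates and $\hat R$ is idempotent, giving $\hat R^{p-1}=(R^{p-1})^{p-1}=R^{(p-1)^2}=R^{p-1}$. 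Consequently $T\hat R^{p-1}=TR^{p-1}\leq S$, which is exactly the assumed condition. Thus Theorem~\ref{th:sparsity_invariance} yields that SI as per \eqref{eq:sparsity_invariance} holds for $(T,R^{p-1})$.

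With SI in hand I would close the argument as follows. Because $\mathbf{\Gamma}$ is invertible and $\mathbf{X}_Q$ is invertible (by \eqref{eq:XisIplusGY} and strict properness of $\mathbf{G}$, one has $\mathbf{X}_Q(j\omega)\to I_p$ at high frequency, so $\det\mathbf{X}_Q$ is not identically zero), the product $\mathbf{X}_Q\mathbf{\Gamma}$ is invertible and $\mathbf{K}=(\mathbf{Y}_Q\mathbf{\Gamma})(\mathbf{X}_Q\mathbf{\Gamma})^{-1}=\mathbf{Y}_Q\mathbf{X}_Q^{-1}$. Any feasible point of $\mathcal{P}_{T,R^{p-1}}$ satisfies $\mathbf{Y}_Q\mathbf{\Gamma}\in\text{Sparse}(T)$ and $\mathbf{X}_Q\mathbf{\Gamma}\in\text{Sparse}(R^{p-1})$, so the SI property for $(T,R^{p-1})$ gives $\mathbf{K}=(\mathbf{Y}_Q\mathbf{\Gamma})(\mathbf{X}_Q\mathbf{\Gamma})^{-1}\in\text{Sparse}(S)$. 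Combined with $\mathbf{K}\in\mathcal{C}_{\text{stab}}$ and convexity, this shows that $\mathcal{P}_{T,R^{p-1}}$ is a convex restriction of $\mathcal{P}_K$.

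The only genuinely non-routine point, and the step I would flag as the main obstacle, is the saturation/idempotency of the structural closure $R^{p-1}$: one must check that replacing the constraint pattern $R$ by the larger pattern $R^{p-1}$ (which enlarges the feasible set and hence \emph{tightens} the restriction, the very reason the corollary is stated with $\mathcal{P}_{T,R^{p-1}}$) does not break the hypothesis $T\hat R^{p-1}\leq S$ needed to reinvoke Theorem~\ref{th:sparsity_invariance}. This is exactly where the assumption $R\geq I_p$ is essential, since without a unit diagonal the Boolean powers need not be monotone and the identity $\hat R^{p-1}=\hat R$ can fail.
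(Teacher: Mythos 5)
Your proposal is correct and follows essentially the same route as the paper: convexity is immediate, feasible points give stabilizing controllers via the Youla parametrization, and the sparsity claim reduces to invoking Theorem~\ref{th:sparsity_invariance} for the pair $(T,R^{p-1})$, with the key closing step being exactly the saturation $T(R^{p-1})^{p-1}=TR^{p-1}\leq S$ that the paper justifies by transitive closure of the graph with adjacency matrix $R$. Your version is somewhat more explicit about the idempotency of $R^{p-1}$ and the invertibility of $\mathbf{X}_Q\mathbf{\Gamma}$, but the argument is the same.
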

 \begin{proof}
\emph{Problem $\mathcal{P}_{T,R^{p-1}}$ is obviously convex. We only need to show that any solution to $\mathcal{P}_{T,R^{p-1}}$ corresponds to a feasible solution of $\mathcal{P}_Q$.} 

\emph{First, given any invertible $\mathbf{\Gamma} \in \mathcal{R}_{c}^{p \times p}$ we have $$(\mathbf{Y}_Q\mathbf{\Gamma})(\mathbf{X}_Q\mathbf{\Gamma})^{-1}=\mathbf{Y}_Q\mathbf{X}_Q^{-1}.$$ 
Let $\mathbf{Y}=\mathbf{Y}_Q\mathbf{\Gamma}$ and $\mathbf{X}=\mathbf{X}_Q\mathbf{\Gamma}$ in (\ref{eq:sparsity_invariance}). Since (\ref{eq:sparsity_invariance}) holds by Theorem~\ref{th:sparsity_invariance}, by definition $\mathbf{YX}^{-1}=\mathbf{Y}_Q\mathbf{X}_Q^{-1} \in \text{Sparse}(S)$ and thus every solution of $\mathcal{P}_{T,R}$ is a solution of $\mathcal{P}_Q$.}

\emph{Second, since $\mathcal{P}_Q$ is equivalent to $\mathcal{P}_K$, we conclude that  $\mathcal{P}_{T,R}$ is a restriction of $\mathcal{P}_K$ for every invertible $\mathbf{\Gamma} \in \mathcal{R}_c^{p \times p}$.}

\emph{Finally, since $TR^{p-1}\leq S$ and $R\geq I_p$ we have that  $T(R^{p-1})^{p-1} \leq S$ by transitive closure of the graph having $R$ as its adjacency matrix. Hence, $\mathcal{P}_{T,R^{p-1}}$ is a convex restriction of $\mathcal{P}_K$ for every invertible $\mathbf{\Gamma} \in \mathcal{R}_c^{p \times p}$.}
 \end{proof}

In summary, the algebraic conditions
\begin{equation}
\label{eq:TR<S}
 T\leq S\text{ and }TR^{p-1}\leq S\,,
\end{equation}
are equivalent to SI and yield  a class of convex restrictions of $\mathcal{P}_K$. 
  Clearly, our condition (\ref{eq:TR<S}) includes the choice $T=S$ and $R$ is (block)-diagonal as per \cite{geromel1994decentralized, zheng2017convex, conte2012distributed}. We will further show in Section~\ref{se:SparsityInvarianceQI} that the convex restrictions developed in \cite{rotkowitz2012nearest} are a particular case of (\ref{eq:TR<S}). Therefore, our notion of SI naturally encompasses and extends previous convex restrictions of $\mathcal{P}_K$.

\begin{remark}
\emph{For each $T$ and $R$ as per (\ref{eq:TR<S}), it is always preferable to  solve the convex restriction $\mathcal{P}_{T,R^{p-1}}$ instead of $\mathcal{P}_{T,R}$. Indeed, notice that since $TR^{p-1} \leq S$ and $R\geq I_p$, then $T(R^{p-1})^{p-1} \leq S$. Equivalently, when $T$ and $R$ satisfy sparsity invariance (\ref{eq:TR<S}), so do $T$ and $R^{p-1}$, and both $\mathcal{P}_{T,R}$ and $\mathcal{P}_{T,R^{p-1}}$ are convex restrictions of $\mathcal{P}_K$. Since requiring  $\mathbf{X}_Q \in \text{Sparse}(R')$ for some $R' < R^{p-1}$ may be conservative in the case $\text{Sparse}(R') \subset \text{Sparse}(R^{p-1})$, we will focus on the convex restriction $\mathcal{P}_{T,R^{p-1}}$ to avoid this possibility.}
\end{remark}

After determining all the matrices $T$ and $R$ for sparsity invariance, a natural follow-up question arises: how can we choose $T$ and $R$ as per Theorem~\ref{th:sparsity_invariance} to obtain a convex restriction of $\mathcal{P}_K$ that is as tight as possible?

\subsection{ Optimized design of SI}
\label{se:optimal}

Here, we study how to choose the binary matrices 
$T$ and $R$ optimally for a fixed invertible $\mathbf{\Gamma}\in \mathcal{R}_c^{p \times p}$. 

  In order to determine the best performing choice for $T$ and $R$ satisfying (\ref{eq:TR<S}), one would need in general to solve $\mathcal{P}_{T,R^{p-1}}$ with the chosen $\mathbf{\Gamma}$ for each $T$ and $R$  such that (\ref{eq:TR<S}) holds, and then select the problem minimizing the objective $\|\mathbf{T}_1-\mathbf{T}_2\mathbf{Q}\mathbf{T}_3\|$.  Clearly, this approach is not tractable in general, as one needs to solve a large number of convex programs that is exponential in $m$ and $p$, that is, one convex program for each binary matrices $T$ and $R$ such that $TR^{p-1}\leq S$. Even if we simplify the search above by fixing any $T\leq S$ and looking for the best performing choice of $R$, we would still need to solve a large number of convex programs that is exponential in $p$, that is, one convex program for each binary matrix $R$ such that $TR^{p-1} \leq S$. To deal with the above challenges,  here we suggest a suboptimal, but  computationally efficient algorithm that  generates a locally optimized binary matrix $R$ tailored to any chosen $T\leq S$. 

Specifically, 
our proposed approach is to $1)$ select $T\leq S$ and then $2)$ compute that binary matrix  $R^\star_T$ which is the least sparse among those satisfying 
\begin{equation}
\label{eq:restriction}
T{R}^{p-1}\leq T\,.
\end{equation}
Clearly, both $1)$ and $2)$ above are simplifications of the general problem of finding the globally tightest convex restriction $\mathcal{P}_{T,R}$ of $\mathcal{P}_K$ for a fixed invertible $\mathbf{\Gamma} \in \mathcal{R}_c^{p \times p}$; indeed, we do not optimize over $T$ and we impose \eqref{eq:restriction}, a condition stronger than the SI requirement \eqref{eq:TR<S}. 
The gain is that $R^\star_T$ is unique and can be computed efficiently as per Algorithm~\ref{alg:procedure}, which has a polynomial complexity of $\mathcal{O}(mp^2)$. 

	\begin{algorithm}[H]
	\caption{Generation of $R^\star_T$}
	\label{alg:procedure}
  \begin{algorithmic}[1]
  \State{Initialize $R^\star_T=1_{p \times p}$}
  \For{ each $i =1,\ldots, m$, $k=1,\ldots, p$}
  	\If{$T_{ik}==0$}
  	\For{ each $j=1, \ldots, p$}
  	\If{$T_{ij}==1$}
  	\State{$(R^\star_T)_{jk}\leftarrow 0$}
  	\EndIf
  	\EndFor
  	\EndIf
  \EndFor
  \end{algorithmic}
  \end{algorithm}
  
The idea behind Algorithm~\ref{alg:procedure} is to only set an entry of $R_T^\star$ to 0 if the condition $TR_T^\star \leq T$ would be violated. We now formalize the main result 
about $R_T^\star$ .

\begin{theorem}
\label{th:optimal_binary}
    Consider a binary matrix $T \in \{0,1\}^{m \times p}$, and define $\mathcal{R}_T := \{R \in \{0,1\}^{p \times p} \mid R \geq I_p, (\ref{eq:restriction})\text{ holds}\}$. Then,
    \begin{enumerate}
      \item There exists a unique $R^\star_T\in \mathcal{R}_T$ such that
    $
        R^\star_T \geq R^{p-1}, \forall R \in \mathcal{R}_T.
    $
      \item Such  $R^\star_T$ can be computed via Algorithm~\ref{alg:procedure}.
    \end{enumerate}
\end{theorem}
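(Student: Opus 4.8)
The plan is to first pin down the algebraic meaning of condition~\eqref{eq:restriction} and of Algorithm~\ref{alg:procedure}, and then verify the maximality and uniqueness claims directly. The key preliminary fact concerns Boolean powers: since every $R\in\mathcal{R}_T$ satisfies $R\geq I_p$, monotonicity of the (Struct-)product gives $R\leq R^2\leq\cdots$, so the powers form a non-decreasing chain of binary matrices that must stabilize. Because $R$ is then the adjacency matrix of a graph on $p$ nodes carrying all self-loops, every reachable ordered pair is joined by a walk of length at most $p-1$, so $R^{p-1}$ equals the transitive closure of $R$ and is idempotent, $(R^{p-1})^{p-1}=R^{p-1}$. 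Writing $\bar R:=R^{p-1}$, the constraint \eqref{eq:restriction} reads $T\bar R\leq T$, which I would unpack entrywise into the implication: for all $i,j,k$, if $T_{ij}=1$ and $\bar R_{jk}=1$ then $T_{ik}=1$.

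Second, I would characterize the output of Algorithm~\ref{alg:procedure}. By inspection the algorithm sets $(R^\star_T)_{jk}=0$ precisely when there exists a row $i$ with $T_{ij}=1$ and $T_{ik}=0$; hence on termination $(R^\star_T)_{jk}=1$ if and only if $T_{ij}=1\Rightarrow T_{ik}=1$ for every $i$, i.e. if and only if the $j$-th column of $T$ is entrywise dominated by its $k$-th column. This reading immediately yields $R^\star_T\geq I_p$ (every column dominates itself) and shows $R^\star_T$ is transitively closed: if column $j$ is dominated by column $l$ and column $l$ by column $k$, then column $j$ is dominated by column $k$, so $(R^\star_T)^2\leq R^\star_T$, and combined with $R^\star_T\geq I_p$ this gives $(R^\star_T)^{p-1}=R^\star_T$. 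Membership $R^\star_T\in\mathcal{R}_T$ then follows by checking $TR^\star_T\leq T$: whenever $T_{ij}=1$ and $(R^\star_T)_{jk}=1$, column $j$ contains coordinate $i$ and is dominated by column $k$, forcing $T_{ik}=1$.

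Third, and this is the crux, I would establish the maximality $R^\star_T\geq R^{p-1}$ for every $R\in\mathcal{R}_T$. Fixing such an $R$ and supposing $\bar R_{jk}=1$, for any row $i$ with $T_{ij}=1$ the unpacked constraint $T\bar R\leq T$ yields $T_{ik}=1$; thus column $j$ of $T$ is dominated by column $k$, which by the characterization above is exactly the condition $(R^\star_T)_{jk}=1$. Hence $\bar R\leq R^\star_T$, i.e. $R^{p-1}\leq R^\star_T$ for all $R\in\mathcal{R}_T$, proving both the existence asserted in part~1 and the computability asserted in part~2.

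Finally, uniqueness is a short antisymmetry argument: if $R_1,R_2\in\mathcal{R}_T$ each dominate every $R^{p-1}$, then $R_1\geq R_2^{p-1}\geq R_2$ (using $R_2\geq I_p$) and symmetrically $R_2\geq R_1$, so $R_1=R_2$. I expect the third step to be the main obstacle, since it requires translating the matrix inequality $TR^{p-1}\leq T$ into the purely combinatorial statement of column containment; once the column-domination reading of both the constraint and the algorithm is in hand, existence, membership, maximality, uniqueness, and the $\mathcal{O}(mp^2)$ cost of the triple loop all follow directly.
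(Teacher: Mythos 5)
Your proof is correct and follows essentially the same route as the paper's: read off the entrywise characterization of Algorithm~\ref{alg:procedure}'s output, verify $TR^\star_T\leq T$ and hence membership in $\mathcal{R}_T$, and obtain maximality from the observation that $(R^\star_T)_{jk}=0$ only in situations where $TR^{p-1}\leq T$ already forces $(R^{p-1})_{jk}=0$. Your write-up is in fact slightly more complete than the paper's --- notably the column-domination reading that gives idempotence of $R^\star_T$, and the explicit antisymmetry argument for uniqueness, which the paper leaves implicit.
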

\begin{proof}
\emph{Let $R^\star_T$ be the unique binary matrix generated by Algorithm~\ref{alg:procedure}. It is easy to check that $TR^\star_T \leq T$ by construction.  Since $R^\star_T\geq I_p$, it follows $(TR^\star_T)R^\star_T \leq TR^\star_T \leq T$ and $T(R^\star_T)^{p-1} \leq \cdots \leq TR^\star_T\leq T$. We conclude $R^\star_T \in \mathcal{R}_T$.}

\emph{Next, consider any binary matrix $R \in \mathcal{R}_T$. By definition, we have that $TR^{p-1} \leq T$ and so $(R^{p-1})_{jk}=0$ whenever $T_{ij}=1$ and $T_{ik}=0$. Then, $R^{p-1} \leq R^\star_T$ since $(R^\star_T)_{jk}$ is set to $0$ by Algorithm~\ref{alg:procedure} if and only if $T_{ik}=0$ and $T_{ij}=1$. Therefore, we have $R^{p-1} \leq R^\star_T$, $\forall R \in \mathcal{R}_T$.}
\end{proof}

The next corollary connects our result to characterizing tight convex restrictions of $\mathcal{P}_K$.
\begin{corollary}
\label{co:procedure}
Given a binary matrix $T\leq S$, compute $R^\star_T$  as per Algorithm~\ref{alg:procedure}. Then, for every fixed invertible $\mathbf{\Gamma} \in \mathcal{R}_c^{p \times p}$, $\mathcal{P}_{T,R^\star_T}$ is the tightest convex restriction of $\mathcal{P}_K$ among those in the form $\mathcal{P}_{T,R^{p-1}}$ with $R \in \mathcal{R}_T$.
\end{corollary}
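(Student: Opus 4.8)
The plan is to read "tightest" in the sense of the largest feasible set, equivalently the smallest attainable objective value, and to reduce the statement to the dominance property of $R^\star_T$ already established in Theorem~\ref{th:optimal_binary}. First I would observe that within the family $\{\mathcal{P}_{T,R^{p-1}} \mid R \in \mathcal{R}_T\}$ every problem shares the same objective $\|\mathbf{T}_1-\mathbf{T}_2\mathbf{Q}\mathbf{T}_3\|$, the same affine definitions (\ref{eq:Ydoublycoprime})--(\ref{eq:Xdoublycoprime}), the same fixed $\mathbf{\Gamma}$, and the same constraint $\mathbf{Y}_Q\mathbf{\Gamma}\in\text{Sparse}(T)$; the only ingredient that varies is the constraint $\mathbf{X}_Q\mathbf{\Gamma}\in\text{Sparse}(R^{p-1})$. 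Since $\text{Sparse}(\cdot)$ is monotone, i.e. $A\leq B$ implies $\text{Sparse}(A)\subseteq\text{Sparse}(B)$ (more prescribed zeros means a smaller subspace), the feasible set of $\mathcal{P}_{T,R^{p-1}}$ grows with $R^{p-1}$ in the partial order $\leq$. Hence a larger $R^{p-1}$ yields a tighter restriction, and it suffices to exhibit the largest possible $R^{p-1}$ in the family.

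Next I would verify that $\mathcal{P}_{T,R^\star_T}$ actually belongs to this family and is a genuine convex restriction. The key observation is that $R^\star_T$ is a fixed point of the map $R\mapsto R^{p-1}$: on one hand $R^\star_T\geq I_p$ gives $(R^\star_T)^{p-1}\geq R^\star_T$ (multiplying by $R^\star_T\geq I_p$ can only add nonzeros), while on the other hand Theorem~\ref{th:optimal_binary} applied with $R=R^\star_T\in\mathcal{R}_T$ gives $R^\star_T\geq (R^\star_T)^{p-1}$; combining the two yields $(R^\star_T)^{p-1}=R^\star_T$. Therefore $\mathcal{P}_{T,R^\star_T}=\mathcal{P}_{T,(R^\star_T)^{p-1}}$ is exactly the member of the family indexed by $R=R^\star_T\in\mathcal{R}_T$, and since $T\leq S$ and $T(R^\star_T)^{p-1}\leq T\leq S$, Corollary~\ref{co:SI} certifies that it is a convex restriction of $\mathcal{P}_K$ for any fixed invertible $\mathbf{\Gamma}$.

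Finally I would close the argument using dominance. By Theorem~\ref{th:optimal_binary}, $R^\star_T\geq R^{p-1}$ for every $R\in\mathcal{R}_T$, so $\text{Sparse}(R^{p-1})\subseteq\text{Sparse}(R^\star_T)$ and the feasible set of each $\mathcal{P}_{T,R^{p-1}}$ is contained in that of $\mathcal{P}_{T,R^\star_T}$. As the objective and the remaining constraints are identical across the family, the optimal value of $\mathcal{P}_{T,R^\star_T}$ is no larger than that of any $\mathcal{P}_{T,R^{p-1}}$, which is precisely the claim that $\mathcal{P}_{T,R^\star_T}$ is the tightest restriction in the family. I expect the only delicate point to be the fixed-point identity $(R^\star_T)^{p-1}=R^\star_T$, which guarantees that the dominating matrix is itself an admissible index of the family rather than a tighter object lying outside it; everything else is the monotonicity of $\text{Sparse}(\cdot)$ and the nesting of feasible sets, which are routine.
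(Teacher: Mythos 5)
Your proof is correct and follows essentially the same route as the paper's: both reduce the claim to the dominance $R^{p-1}\leq R^\star_T$ from Theorem~\ref{th:optimal_binary}, note that only the constraint $\mathbf{X}_Q\mathbf{\Gamma}\in\text{Sparse}(\cdot)$ varies across the family, and conclude by nesting of feasible sets. Your extra verification that $(R^\star_T)^{p-1}=R^\star_T$, so that $\mathcal{P}_{T,R^\star_T}$ is genuinely a member of the family being compared, is a small point the paper's proof leaves implicit, and it is a welcome addition rather than a deviation.
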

\begin{proof}
\emph{Fix an invertible $\mathbf{\Gamma} \in \mathcal{R}_c^{p \times p}$ and consider the problems $\mathcal{P}_{T,R^{p-1}}$ and $\mathcal{P}_{T,R^\star_T}$, where $R \in \mathcal{R}_T$ and $R^\star_T$ is generated by Algorithm~\ref{alg:procedure}. By Theorem~\ref{th:optimal_binary}, we have $R^{p-1} \leq R^\star_T$, meaning that $\text{Sparse}(R^{p-1}) \subset \text{Sparse}(R^\star_T)$. }

\emph{The only difference between problem $\mathcal{P}_{T,R^{p-1}}$ and problem $\mathcal{P}_{T,R^\star_T}$ is: $\mathcal{P}_{T,R^{p-1}}$ requires  $\mathbf{X}_Q\mathbf{\Gamma} \in \text{Sparse}(R^{p-1})$ while $\mathcal{P}_{T,R^\star_T}$ requires $\mathbf{X}_Q\mathbf{\Gamma} \in \text{Sparse}(R^\star_T)$. Therefore, we conclude that $\mathcal{P}_{T,R^\star_T}$ admits the largest feasible region among all $\mathcal{P}_{T,R^{p-1}}$ with $R \in \mathcal{R}_T$. This completes our proof.}
\end{proof}


Our suggested procedure can find a tight convex restriction for $\mathcal{P}_K$ by using the computationally efficient Algorithm~1, which makes the approach practical for  practitioners. However,  optimally choosing $\mathbf{\Gamma}$ and $T$ is also a non-trivial task which we leave for future work. We remark that in the lack of any further insight, one can always choose $T=S$ and $\mathbf{\Gamma}=I_p$ and still obtain sparse controllers with tight sub-optimality gaps, as will be shown experimentally in Section~\ref{se:numerical}. Furthermore, as  shown in Section~\ref{se:SparsityInvarianceQI}, the trivial choice $T=S$ and $\mathbf{\Gamma}=I_p$ combined with Algorithm~\ref{alg:procedure} for choosing $R$ is sufficient to recover and extend the optimality results of \cite{rotkowitz2006characterization}, \cite{rotkowitz2012nearest}  which are based on the Quadratic Invariance (QI) notion.  We conclude this section by providing an example to illustrate the SI approach.

\begin{example}
\emph{Motivated by the numerical example in \cite{rotkowitz2006characterization}, let us consider the unstable plant
\begin{equation*}
\mathbf{G}=\begin{bmatrix}
u(\sigma)&0&0&0&0\\
u(\sigma)&v(\sigma)&0&0&0\\
u(\sigma)&v(\sigma)&u(\sigma)&0&0\\
u(\sigma)&v(\sigma)&u(\sigma)&u(\sigma)&0\\
u(\sigma)&v(\sigma)&u(\sigma)&u(\sigma)&v(\sigma)
\end{bmatrix}\,,
\end{equation*}
with $u(\sigma)=u(s)=\frac{1}{s+1}$, $v(\sigma)=v(s)=\frac{1}{s-1}$ in continuous-time or $u(\sigma)=u(z)=\frac{0.1}{z-0.5}$, $v(\sigma)=v(z)=\frac{1}{z-2}$ in discrete-time}, and define
\begin{equation*}
\mathbf{P}_{11}=\begin{bmatrix}
\mathbf{G}&0_{5\times 5}\\0_{5\times 5}&0_{5\times 5}
\end{bmatrix}\,,\quad \mathbf{P}_{12}=\begin{bmatrix}
\mathbf{G}\\I_5
\end{bmatrix}\,,\quad \mathbf{P}_{21}=\begin{bmatrix}
\mathbf{G}&I_5
\end{bmatrix}\,.
\end{equation*}
Our goal is to design a stabilizing controller $\mathbf{K}$ which minimizes  $\|f(\mathbf{K})\|_{\mathcal{H}_2}$ and satisfies the sparsity pattern below:
\begin{equation*}
S=\begin{bmatrix}
1&0&0&0&0\\1&1&0&0&0\\0&1&1&0&0\\0&1&1&1&0\\0&1&1&1&1
\end{bmatrix}\,.
\end{equation*}
This information structure is depicted in Figure~\ref{fig:example}.
\label{ex:1}
	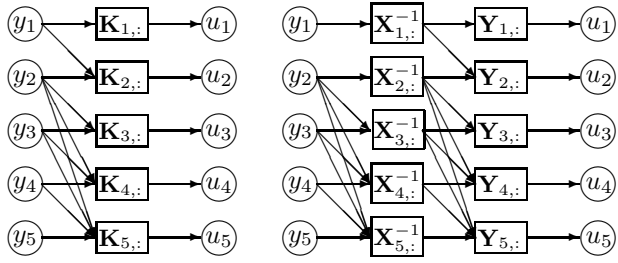
\begin{figure}[t]
	      \centering
	      \captionsetup{justification=raggedright} 
\begin{center}\setlength{\unitlength}{0.008in}
\begin{picture}(420,160)(0,0)

\put(0,20){\circle{22}}
\put(-8,14){\makebox(0,0)[lb]{$y_5$}}

\put(0,55){\circle{22}}
\put(-8,49){\makebox(0,0)[lb]{$y_4$}}

\put(0,90){\circle{22}}
\put(-8,84){\makebox(0,0)[lb]{$y_3$}}

\put(0,125){\circle{22}}
\put(-8,119){\makebox(0,0)[lb]{$y_2$}}

\put(0,160){\circle{22}}
\put(-8,154){\makebox(0,0)[lb]{$y_1$}}

\put(47,10){\framebox(33,20){}}
\put(48,11){\makebox(0,0)[lb]{\small $\mathbf{K}_{5,:}$}}

\put(47,45){\framebox(33,20){}}
\put(48,46){\makebox(0,0)[lb]{\small $\mathbf{K}_{4,:}$}}

\put(47,80){\framebox(33,20){}}
\put(48,81){\makebox(0,0)[lb]{\small $\mathbf{K}_{3,:}$}}

\put(47,115){\framebox(33,20){}}
\put(48,116){\makebox(0,0)[lb]{\small $\mathbf{K}_{2,:}$}}

\put(47,150){\framebox(33,20){}}
\put(48,151){\makebox(0,0)[lb]{\small $\mathbf{K}_{1,:}$}}

\put(11,160){\vector( 1, 0){35}}
\put(11,160){\vector( 1, -1){35}}

\put(11,125){\vector( 1, 0){35}}
\put(11,125){\vector( 1, -1){35}}
\put(11,125){\vector( 1, -2){35}}
\put(11,125){\vector( 1, -3){35}}

\put(11,90){\vector( 1, 0){35}}
\put(11,90){\vector( 1, -1){35}}
\put(11,90){\vector( 1, -2){35}}

\put(11,55){\vector( 1, 0){35}}
\put(11,55){\vector( 1, -1){35}}

\put(11,20){\vector( 1, 0){35}}

\put(80,160){\vector( 1, 0){35}}
\put(80,125){\vector( 1, 0){35}}
\put(80,90){\vector( 1, 0){35}}
\put(80,55){\vector( 1, 0){35}}
\put(80,20){\vector( 1, 0){35}}

\put(127,20){\circle{22}}
\put(118,14){\makebox(0,0)[lb]{$u_5$}}

\put(127,55){\circle{22}}
\put(118,49){\makebox(0,0)[lb]{$u_4$}}

\put(127,90){\circle{22}}
\put(118,84){\makebox(0,0)[lb]{$u_3$}}

\put(127,125){\circle{22}}
\put(118,119){\makebox(0,0)[lb]{$u_2$}}

\put(127,160){\circle{22}}
\put(118,154){\makebox(0,0)[lb]{$u_1$}}

\put(180,20){\circle{22}}
\put(172,14){\makebox(0,0)[lb]{$y_5$}}

\put(180,55){\circle{22}}
\put(172,49){\makebox(0,0)[lb]{$y_4$}}

\put(180,90){\circle{22}}
\put(172,84){\makebox(0,0)[lb]{$y_3$}}

\put(180,125){\circle{22}}
\put(172,119){\makebox(0,0)[lb]{$y_2$}}

\put(180,160){\circle{22}}
\put(172,154){\makebox(0,0)[lb]{$y_1$}}

\put(227,8){\framebox(33,25){}}
\put(228,9){\makebox(0,0)[lb]{\small $\mathbf{X}^{-1}_{5,:}$}}

\put(227,43){\framebox(33,25){}}
\put(228,44){\makebox(0,0)[lb]{\small $\mathbf{X}^{-1}_{4,:}$}}

\put(228,78){\framebox(33,25){}}
\put(228,79){\makebox(0,0)[lb]{\small $\mathbf{X}^{-1}_{3,:}$}}

\put(227,113){\framebox(33,25){}}
\put(228,114){\makebox(0,0)[lb]{\small $\mathbf{X}^{-1}_{2,:}$}}

\put(227,148){\framebox(33,25){}}
\put(228,149){\makebox(0,0)[lb]{\small $\mathbf{X}^{-1}_{1,:}$}}

\put(191,160){\vector( 1, 0){35}}

\put(191,125){\vector( 1, 0){35}}
\put(191,125){\vector( 1, -1){35}}
\put(191,125){\vector( 1, -2){35}}
\put(191,125){\vector( 1, -3){35}}

\put(191,90){\vector( 1, 0){35}}
\put(191,90){\vector( 1, -1){35}}
\put(191,90){\vector( 1, -2){35}}

\put(191,55){\vector( 1, 0){35}}
\put(191,55){\vector( 1, -1){35}}


\put(191,20){\vector( 1, 0){35}}

\put(260,160){\vector( 1, 0){35}}
\put(260,160){\vector( 1, -1){35}}

\put(260,125){\vector( 1, 0){35}}
\put(260,125){\vector( 1, -1){35}}
\put(260,125){\vector( 1, -2){35}}
\put(260,125){\vector( 1, -3){35}}

\put(260,90){\vector( 1, 0){35}}
\put(260,90){\vector( 1, -1){35}}
\put(260,90){\vector( 1, -2){35}}

\put(260,55){\vector( 1, 0){35}}
\put(260,55){\vector( 1, -1){35}}

\put(260,20){\vector( 1, 0){35}}

\put(295,10){\framebox(33,20){}}
\put(296,11){\makebox(0,0)[lb]{\small $\mathbf{Y}_{5,:}$}}

\put(295,45){\framebox(33,20){}}
\put(296,46){\makebox(0,0)[lb]{\small $\mathbf{Y}_{4,:}$}}

\put(295,80){\framebox(33,20){}}
\put(296,81){\makebox(0,0)[lb]{\small $\mathbf{Y}_{3,:}$}}

\put(295,115){\framebox(33,20){}}
\put(296,116){\makebox(0,0)[lb]{\small $\mathbf{Y}_{2,:}$}}

\put(295,150){\framebox(33,20){}}
\put(296,151){\makebox(0,0)[lb]{\small $\mathbf{Y}_{1,:}$}}

\put(328,160){\vector( 1, 0){35}}
\put(328,125){\vector( 1, 0){35}}
\put(328,90){\vector( 1, 0){35}}
\put(328,55){\vector( 1, 0){35}}
\put(328,20){\vector( 1, 0){35}}

\put(375,20){\circle{22}}
\put(367,14){\makebox(0,0)[lb]{$u_5$}}

\put(375,55){\circle{22}}
\put(367,49){\makebox(0,0)[lb]{$u_4$}}

\put(375,90){\circle{22}}
\put(367,84){\makebox(0,0)[lb]{$u_3$}}

\put(375,125){\circle{22}}
\put(367,119){\makebox(0,0)[lb]{$u_2$}}

\put(375,160){\circle{22}}
\put(367,154){\makebox(0,0)[lb]{$u_1$}}


\end{picture}\end{center}
    \caption{\footnotesize  In the figure, we denote  as $\mathbf{K}_{i,:}$, $\mathbf{Y}_{j,:}$, $\mathbf{X}^{-1}_{k,:}$ the $i$th, $j$th and $k$th row of $\mathbf{K}$, $\mathbf{Y}_Q$ and $\mathbf{X}^{-1}_Q$ respectively. For every non-zero entry of $\mathbf{K}_{i,:}$, $\mathbf{Y}_{j,:}$ or $\mathbf{X}^{-1}_{k,:}$ the corresponding signal enters the block with an arrow, thus representing the information flow from measured outputs to control signals. The scheme on the left represents the desired sparsity pattern $S$ for controller $\mathbf{K}$. The scheme on the right represents the sparsity pattern of controllers that are feasible for $\mathcal{P}_{S,R^\star_S}$, i.e. those in the form $\mathbf{Y}_Q(\mathbf{X}_Q)^{-1}$ with $\mathbf{Y}_Q \in \text{Sparse}(S)$ and $\mathbf{X}_Q \in \text{Sparse}(R^\star_S)$.}
    \label{fig:example}
	\end{figure}

  \emph{Here, we apply the proposed SI approach and Algorithm~\ref{alg:procedure} for sparsity design in order to obtain a convex restriction of $\mathcal{P}_K$. For this instance, we choose to fix $T=S$ and $\mathbf{\Gamma}=I_p$. According to Theorem~\ref{th:optimal_binary} and Corollary~\ref{co:procedure}, the tightest convex restriction of $\mathcal{P}_K$ such that $TR^{p-1}=SR^{p-1}\leq S$ is $\mathcal{P}_{S,R^\star_S}$, where $R^\star_S$
\begin{equation*}
R^\star_S=\begin{bmatrix}
1&0&0&0&0\\
0&1&0&0&0\\
0&1&1&0&0\\
0&1&1&1&0\\
0&1&1&1&1
\end{bmatrix}\,,
\end{equation*}
 is generated via Algorithm~\ref{alg:procedure}. Given a doubly coprime factorization of $\mathbf{G}$, any solution of $\mathcal{P}_{S,R^\star_S}$  is in the form $\mathbf{K}=\mathbf{Y}_Q(\mathbf{X}_Q)^{-1} \in \mathcal{C}_{\text{stab}} \cap \text{Sparse}(S)$, where $\mathbf{Y}_Q \in \text{Sparse}(S)$, $\mathbf{X}_Q \in \text{Sparse}(R^\star_S)$ and $(\mathbf{X}_Q)^{-1} \in \text{Sparse}(R^\star_S)$.} 
\begin{remark}[Performance improvement]
\emph{The classical immediate idea would be to require that $\mathbf{X}_Q$ is diagonal as per  \cite{conte2012distributed, geromel1994decentralized, zheng2017convex}; instead, SI allows the off-diagonal entries of $\mathbf{X}_Q=I_p+\mathbf{GY}_Q$ to be non-zero through the optimized choice of $R^\star_S$, thus removing unnecessary constraints on the entries of $(\mathbf{GY}_Q)$.}
 \emph{This additional freedom can be seen graphically on the right side of Figure~\ref{fig:example}; the information flow from outputs to control inputs remains the same as the one encoded by $S$, but we allow  for as many arrows as possible in the first stage from outputs to the rows of $\mathbf{X}^{-1}$, thus maximizing the degrees of freedom in the optimization. In Section~\ref{se:numerical} we will numerically solve $\mathcal{P}_{S,R^\star_S}$ for this example and show that performance improvement over the method of \cite{rotkowitz2012nearest} is obtained.}
 \end{remark}
\end{example}


\section{Beyond Quadratic Invariance}
\label{se:SparsityInvarianceQI}
We start by recalling the well-known notion of Quadratic Invariance (QI) \cite{rotkowitz2006characterization} in Subsection~\ref{sub:QI}, and its application to the design of globally optimal \cite{rotkowitz2006characterization} and sub-optimal \cite{rotkowitz2012nearest} distributed dynamic output-feedback controllers in Subsection~\ref{sub:conv_restrictions_QI}. In Subsections~\ref{sub:extending_previous}, \ref{sub:strictly_beyond} we show that the suggested SI notion  strictly goes beyond that of QI for sparsity constraints:  1) the controllers obtained using the SI notion perform at least as well as those obtained by \cite{rotkowitz2006characterization} and \cite{rotkowitz2012nearest}; 2) we show through examples that using the SI notion we can recover globally optimal controllers even when QI does not hold, and that strict performance improvements over \cite{rotkowitz2012nearest} can be obtained in general. Last, in Subsection~\ref{sub:static}, we discuss the applicability of SI to computing distributed static controllers, whereas the QI notion is not applicable.

\subsection{Quadratic Invariance}
\label{sub:QI}
The celebrated work of \cite{rotkowitz2006characterization} characterized conditions on $\mathbf{G}$ and $\text{Sparse}(S)$ under which $\mathcal{P}_K$ admits an exact convex reformulation in the Youla parameter $\mathbf{Q}$, denoted as \emph{quadratic invariance} (QI). 
\begin{definition}[Quadratic invariance~\cite{rotkowitz2006characterization}]
A subspace $\mathcal{K} \subseteq \mathcal{R}_c^{m \times p}$ is QI with respect to $\mathbf{G}$ if
\end{definition}
\begin{equation*}
\mathbf{KGK} \in \mathcal{K}\,, \quad \forall \mathbf{K} \in \mathcal{K}\,.
\end{equation*}

For the purpose of this paper, we will limit our focus to QI sparsity subspaces in the form $\text{Sparse}(S)$. It is shown that given a controller $\mathbf{K}_{\text{nom}} \in \text{Sparse}(S)$ that stabilizes $\mathbf{G}$ and is itself stable,  there exists a parametrization such that $\mathbf{K} \in \text{Sparse}(S) \Leftrightarrow \mathbf{Q} \in \text{Sparse}(S)$~\cite{rotkowitz2006characterization}. Accordingly, a convex optimization problem equivalent to $\mathcal{P}_K$ is obtained. The requirement of a stable and stabilizing controller $\mathbf{K}_{\text{nom}}$  was removed in \cite{sabuau2014youla}. One main result from \cite{sabuau2014youla} is as follows:
\begin{theorem}[Theorem IV.2 of \cite{sabuau2014youla}]
\label{th:QI_Youla}
Consider any doubly-coprime factorization of $\mathbf{G}$ and let $\text{Sparse}(S)$ be QI with respect to $\mathbf{G}$. Then, the following two statements hold:
\begin{enumerate}
\item If $\mathbf{Q} \in \mathcal{RH}_\infty^{m \times p}$ is such that $\mathbf{Y}_Q\in \text{Sparse}(S)$,
then $\mathbf{K}=\mathbf{Y}_Q\mathbf{X}_Q^{-1}$ is a stabilizing controller in $\text{Sparse}(S)$.
\item For any $\mathbf{K} \in \mathcal{C}_{\text{stab}} \cap \text{Sparse}(S)$ there exists $\mathbf{Q} \in \mathcal{RH}_\infty^{m \times p}$ for which $\mathbf{Y}_Q \in \text{Sparse}(S)$ and $\mathbf{K}=\mathbf{Y}_Q\mathbf{X}_Q^{-1}$.
\end{enumerate}
\end{theorem}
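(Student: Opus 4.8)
The plan is to reduce both statements to a single algebraic fact about the closed-loop map under quadratic invariance: whenever $\text{Sparse}(S)$ is QI with respect to $\mathbf{G}$, the map $\mathbf{W} \mapsto \mathbf{W}(I_p - \mathbf{G W})^{-1}$ sends $\text{Sparse}(S)$ into itself \cite{rotkowitz2006characterization}. First I would set up the bijective bridge between $\mathbf{Y}_Q$ and $\mathbf{K}$. Using $\mathbf{X}_Q = I_p + \mathbf{G Y}_Q$ from \eqref{eq:XisIplusGY} together with $\mathbf{K} = \mathbf{Y}_Q \mathbf{X}_Q^{-1}$, expanding $\mathbf{K}(I_p + \mathbf{G Y}_Q) = \mathbf{Y}_Q$ gives $(I_m - \mathbf{K G})\mathbf{Y}_Q = \mathbf{K}$, hence, by the push-through identity,
\begin{equation*}
\mathbf{Y}_Q = (I_m - \mathbf{K G})^{-1}\mathbf{K} = \mathbf{K}(I_p - \mathbf{G K})^{-1}\,,
\end{equation*}
with inverse relation $\mathbf{K} = \mathbf{Y}_Q(I_p + \mathbf{G Y}_Q)^{-1}$. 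This identity is the only place where the Youla structure enters; everything else concerns preservation of $\text{Sparse}(S)$.

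For statement~1, given $\mathbf{Y}_Q \in \text{Sparse}(S)$, I would apply the closed-loop-invariance fact to $-\mathbf{Y}_Q \in \text{Sparse}(S)$ (a subspace is closed under negation), yielding $-\mathbf{Y}_Q(I_p + \mathbf{G Y}_Q)^{-1} = -\mathbf{K} \in \text{Sparse}(S)$, hence $\mathbf{K} \in \text{Sparse}(S)$; stabilization is automatic because $\mathbf{K} = \mathbf{Y}_Q \mathbf{X}_Q^{-1}$ with $\mathbf{Q} \in \mathcal{RH}_\infty^{m \times p}$ is precisely the Youla form \eqref{eq:Cstab_YX}. For statement~2, given $\mathbf{K} \in \mathcal{C}_{\text{stab}} \cap \text{Sparse}(S)$, the parametrization \eqref{eq:Cstab_YX} supplies a $\mathbf{Q} \in \mathcal{RH}_\infty^{m \times p}$ with $\mathbf{K} = \mathbf{Y}_Q \mathbf{X}_Q^{-1}$; the identity above then reads $\mathbf{Y}_Q = \mathbf{K}(I_p - \mathbf{G K})^{-1}$, and applying the closed-loop-invariance fact to $\mathbf{K} \in \text{Sparse}(S)$ gives $\mathbf{Y}_Q \in \text{Sparse}(S)$.

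The main obstacle is justifying the closed-loop-invariance fact itself. I would expand $\mathbf{K}(I_p - \mathbf{G K})^{-1} = \sum_{n \geq 0} \mathbf{K}(\mathbf{G K})^n$ and show each summand lies in $\text{Sparse}(S)$ by induction. The inductive step is a polarization argument: for $\mathbf{W}_1, \mathbf{W}_2 \in \text{Sparse}(S)$, QI applied to $\mathbf{W}_1 + \mathbf{W}_2$ and subtraction of the diagonal terms $\mathbf{W}_1 \mathbf{G W}_1, \mathbf{W}_2 \mathbf{G W}_2$ gives $\mathbf{W}_1 \mathbf{G W}_2 + \mathbf{W}_2 \mathbf{G W}_1 \in \text{Sparse}(S)$; choosing $\mathbf{W}_1 = \mathbf{K}$ and $\mathbf{W}_2 = \mathbf{K}(\mathbf{G K})^{n-1}$ collapses this to $2\mathbf{K}(\mathbf{G K})^n \in \text{Sparse}(S)$, so $\mathbf{K}(\mathbf{G K})^n \in \text{Sparse}(S)$ for every $n$.

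The delicate point is the passage to the infinite sum, which is where I expect the real work to lie. The series need not converge pointwise for every $\omega$, but since $\mathbf{G}$ is strictly proper, $\mathbf{G}(j\omega)\mathbf{K}(j\omega) \to 0$ as $|\omega| \to \infty$, so the series converges on a set of infinite measure. On that set every $(i,j)$ entry with $S_{ij} = 0$ is a limit of partial-sum entries that are each zero (finite sums of elements of the subspace $\text{Sparse}(S)$), hence vanishes; since that entry of $\mathbf{K}(I_p - \mathbf{G K})^{-1}$ is rational, vanishing on a set of positive measure forces it to vanish identically, placing $\mathbf{K}(I_p - \mathbf{G K})^{-1}$ in $\text{Sparse}(S)$. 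This rationality-plus-strict-properness argument is the crux, and it parallels the finite transitive-closure reasoning used later for Theorem~\ref{th:sparsity_invariance}.
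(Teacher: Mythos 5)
The paper does not actually prove this theorem: it is imported verbatim as Theorem~IV.2 of \cite{sabuau2014youla} and used as a black box, so there is no in-paper proof to compare against. Judged on its own terms, your reconstruction is correct and is essentially the standard Rotkowitz--Lall/Sab\u{a}u--Martins argument. The reduction is clean: the affine relation $\mathbf{X}_Q=I_p+\mathbf{G}\mathbf{Y}_Q$ from \eqref{eq:XisIplusGY} forces $\mathbf{Y}_Q=\mathbf{K}(I_p-\mathbf{GK})^{-1}$ and $\mathbf{K}=\mathbf{Y}_Q(I_p+\mathbf{G}\mathbf{Y}_Q)^{-1}$, the Youla parametrization \eqref{eq:Cstab_YX} supplies stabilization in statement~1 and existence of $\mathbf{Q}$ in statement~2, and everything else reduces to the invariance of $\text{Sparse}(S)$ under $\mathbf{W}\mapsto\mathbf{W}(I_p-\mathbf{GW})^{-1}$. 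Your polarization step correctly yields $\mathbf{K}(\mathbf{GK})^{n}\in\text{Sparse}(S)$ for all $n$ (for sparsity subspaces one can also argue directly from $S\Delta S\leq S$, \cite[Theorem~26]{rotkowitz2006characterization}, which avoids the induction), and you correctly identify the only delicate point: the Neumann series need not converge everywhere, so you pass to the high-frequency regime where $\mathbf{GK}$ (respectively $\mathbf{G}\mathbf{Y}_Q$) is small by strict properness, conclude the forbidden entries vanish there, and invoke rationality to extend the vanishing to almost all $\omega$. This is the same device the paper's own Appendix uses (identities holding ``for almost all $\omega$'' plus rationality), so the argument is consistent with the paper's conventions. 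The one cosmetic caveat is that each term of the partial sums vanishes only for almost all $\omega$, so the exceptional set is a countable union of null sets; this does not affect the conclusion. I see no genuine gap.
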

 According to Theorem~\ref{th:QI_Youla}, if $\text{Sparse}(S)$ is QI with respect to $\mathbf{G}$, then $\mathcal{P}_K$ can be equivalently reformulated as
    \begin{alignat}{3}
  &\minimize_{\mathbf{Q} \in \mathcal{RH}_\infty^{m \times p}} \quad && \|\mathbf{T}_1-\mathbf{T}_2\mathbf{QT}_3\| \label{eq:problem_QI}\\
	&\st \quad &&(\ref{eq:Ydoublycoprime}),\; (\ref{eq:Xdoublycoprime}),\; \mathbf{Y}_Q  \in \text{Sparse}(S)\,.\nonumber
    \end{alignat}
The optimal solution $\mathbf{Q}^\star$ of (\ref{eq:problem_QI}) can be used to recover the \emph{globally} optimal solution $\mathbf{K}^\star$ of  $\mathcal{P}_K$ via $\mathbf{K}^\star=\mathbf{Y}_{Q^\star}\mathbf{X}_{Q^\star}^{-1}$.
\subsection{Convex restrictions for non-QI sparsity patterns}
\label{sub:conv_restrictions_QI}
When $\text{Sparse}(S)$ is not QI with respect to $\mathbf{G}$, the authors of \cite{rotkowitz2012nearest} proposed finding a binary matrix $T_{\text{QI}} < S$ such that $\text{Sparse}(T_{\text{QI}})$ is QI with respect to $\mathbf{G}$. Then, the constraint $\mathbf{Y}_Q\mathbf{X}_Q^{-1} \in \text{Sparse}(S)$ of problem $\mathcal{P}_{Q}$ can  be replaced by $\mathbf{Y}_Q \in \text{Sparse}(T_{\text{QI}})$, and any feasible $\mathbf{Q}$ for this convex program will correspond to a feasible controller 
\begin{equation} \label{eq:nonQIrelation}
\begin{aligned}
\mathbf{K}=\mathbf{Y}_Q\mathbf{X}_Q^{-1} &\in \mathcal{C}_{\text{stab}} \cap \text{Sparse}(T_{\text{QI}})\\
& \subseteq \mathcal{C}_{\text{stab}} \cap\text{Sparse}(S)\,.
\end{aligned}
\end{equation}
This inclusion~\eqref{eq:nonQIrelation} directly follows from Theorem~\ref{th:QI_Youla} and the fact that $\text{Sparse}(T_{\text{QI}}) \subset \text{Sparse}(S)$.

 A challenge of this approach is to compute $T_{\text{QI}}$ such that $\text{Sparse}(T_{\text{QI}})$ is QI and as close as possible to $S$ in order to reduce conservatism, in the sense that $\|S\|_0-\|T_{\text{QI}}\|_0$ is minimized. In general, there might be multiple choices of $T_{\text{QI}}$ with the same cardinality. Furthermore,  the QI condition $T_{\text{QI}}\Delta T_{\text{QI}} \leq T_{\text{QI}}$ of \cite[Theorem~26]{rotkowitz2006characterization}, where $\Delta=\text{Struct}(\mathbf{G})$, is nonlinear in $T_{\text{QI}}$. For these reasons, a procedure to compute a closest QI subset of $S$ in polynomial time was not provided in \cite{rotkowitz2012nearest}.  Instead, we have shown that the polynomial time Algorithm~\ref{alg:procedure} can be combined with the SI notion to find a convex restriction for any given $T\leq S$. In the next subsections, we show that the recovered controllers perform at least as well as those based on the notion of QI by choosing $T \leq S$ appropriately, and can be strictly more performing in general even with the trivial choice $T=S$.

\subsection{Connections of SI with QI}
\label{sub:extending_previous}
Here, we show that it is not necessary to check the QI property in order to obtain a globally optimal solution. Note that checking the property of QI before solving $\mathcal{P}_{K}$ was proposed in \cite{rotkowitz2006characterization} and required in many subsequent works.  Indeed, the approach in \cite{rotkowitz2006characterization} is guaranteed to yield feasible solutions for $\mathcal{P}_K$ only if QI holds. Instead, our technique can be directly applied given $S$ without first checking QI. This result is summarized in the following theorem and corollary.

\begin{theorem}
\label{th:procedure+QI}
Let $\Delta=\text{Struct}(\mathbf{G})$ and let $R^\star_S$ be the binary matrix generated by Algorithm~\ref{alg:procedure} with $T=S$. The following statements are equivalent.
\begin{enumerate}[i)]
\item $\text{Sparse}(S)$ is QI with respect to $\mathbf{G}$.
\item $R^\star_S \geq I_p+\Delta S$, where $R^\star_S$ is generated by Algorithm~1 with $T=S$.
\end{enumerate}
\end{theorem}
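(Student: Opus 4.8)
The plan is to translate both statements into Boolean (entrywise) conditions on $S$ and $\Delta$, and then to show that each one unfolds to the \emph{same} quantified implication over the index set $(i,j,k,l)\in\{1,\dots,m\}\times\{1,\dots,p\}\times\{1,\dots,p\}\times\{1,\dots,m\}$. Because the paper's conventions set $XZ:=\text{Struct}(XZ)$ and $X+\hat X:=\text{Struct}(X+\hat X)$, every binary-matrix product is a Boolean product: $(\Delta S)_{jk}=1$ iff there is an $l$ with $\Delta_{jl}=1$ and $S_{lk}=1$, and $\le$ is read entrywise. I would keep this Boolean reading in force throughout.

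First I would reduce statement (i). By \cite[Theorem~26]{rotkowitz2006characterization}, $\text{Sparse}(S)$ is QI with respect to $\mathbf{G}$ if and only if $S\Delta S\le S$ (the case $T_{\text{QI}}=S$ of the condition quoted in Subsection~\ref{sub:conv_restrictions_QI}). Entrywise this says: for every $i,k$, if there exist $j,l$ with $S_{ij}=1$, $\Delta_{jl}=1$ and $S_{lk}=1$, then $S_{ik}=1$. Since the conclusion does not depend on $j,l$, pulling the existential quantifiers out of the hypothesis shows that (i) is equivalent to the universal sentence
\begin{equation*}
\forall i,j,k,l:\qquad \bigl(S_{ij}=1 \,\wedge\, \Delta_{jl}=1 \,\wedge\, S_{lk}=1\bigr)\ \Longrightarrow\ S_{ik}=1. \tag{$\ast$}
\end{equation*}

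Next I would read off $R^\star_S$ directly from Algorithm~\ref{alg:procedure} with $T=S$: the entry $(R^\star_S)_{jk}$ is set to $0$ exactly when some row $i$ satisfies $S_{ij}=1$ and $S_{ik}=0$, so $(R^\star_S)_{jk}=1$ iff $\forall i\,(S_{ij}=1\Rightarrow S_{ik}=1)$. In particular $(R^\star_S)_{jj}=1$, hence $R^\star_S\ge I_p$ holds unconditionally and the $I_p$ summand in $I_p+\Delta S$ imposes nothing. Therefore statement (ii), $R^\star_S\ge I_p+\Delta S$, is equivalent to $\Delta S\le R^\star_S$, i.e.\ to: $(\Delta S)_{jk}=1 \Rightarrow (R^\star_S)_{jk}=1$ for all $j,k$. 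Expanding both sides by their Boolean characterizations, this reads: for all $j,k$, if some $l$ has $\Delta_{jl}=1$ and $S_{lk}=1$, then $S_{ij}=1\Rightarrow S_{ik}=1$ for all $i$. Merging the quantifiers yields precisely the sentence $(\ast)$, and comparing the two derivations closes the equivalence.

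I expect the only delicate point to be the bookkeeping around the $I_p$ term and the diagonal. One must verify that $R^\star_S\ge I_p$ holds unconditionally, so that $I_p$ adds no constraint beyond $\Delta S\le R^\star_S$, and that the implication in $(\ast)$ is vacuously true when $j=k$, so that the differing shapes of $\Delta S\in\{0,1\}^{p\times p}$ and $S\Delta S\in\{0,1\}^{m\times p}$ are reconciled correctly through the shared inner indices. A secondary remark worth stating is that, by Theorem~\ref{th:optimal_binary}, characterizing $R^\star_S$ through the one-step inequality $SR^\star_S\le S$ is equivalent to the $(p-1)$-step SI requirement $S(R^\star_S)^{p-1}\le S$, so no information is lost by arguing with the single Boolean product rather than its power.
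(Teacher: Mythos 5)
Your proof is correct, but it takes a genuinely different and more elementary route than the paper's. The paper argues through matrix powers and the maximality property of Theorem~\ref{th:optimal_binary}: for i)~$\Rightarrow$~ii) it shows $S(I_p+\Delta S)^{p-1}\leq S$ so that $I_p+\Delta S$ qualifies as a candidate $R$ dominated by $R^\star_S$, and for ii)~$\Rightarrow$~i) it chains $S(I_p+\Delta S)\leq S(I_p+\Delta S)^{p-1}\leq S(R^\star_S)^{p-1}\leq S$. You bypass both Theorem~\ref{th:optimal_binary} and the $(p-1)$-st powers entirely by reading off the exact closed form of Algorithm~\ref{alg:procedure}'s output, $(R^\star_S)_{jk}=1\iff\forall i\,(S_{ij}=1\Rightarrow S_{ik}=1)$, noting $R^\star_S\geq I_p$ unconditionally, and unfolding both $S\Delta S\leq S$ and $\Delta S\leq R^\star_S$ into the identical quantified sentence $(\ast)$ over $(i,j,k,l)$; the index bookkeeping is consistent ($S$ is $m\times p$, $\Delta$ is $p\times m$, so $S\Delta S$ is $m\times p$ while $\Delta S$ is $p\times p$, and the shared inner indices reconcile the shapes as you observe). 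What the paper's route buys is that it never needs to open up the algorithm --- it reuses only the interface already established in Theorem~\ref{th:optimal_binary} and the $(p-1)$-step SI condition. What your route buys is a self-contained combinatorial argument that makes the equivalence transparent (it is literally the same first-order sentence read two ways) and incidentally exhibits $R^\star_S$ explicitly as the largest $R$ with $SR\leq S$. Your closing remark on the one-step versus $(p-1)$-step characterization is not needed for the proof but is consistent with the paper's results.
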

\begin{proof}
\emph{i) $\Rightarrow$ ii): Suppose that $\text{Sparse}(S)$ is QI with respect to $\mathbf{G}$.  We have that $S\Delta S \leq S$ by \cite[Theorem 26]{rotkowitz2006characterization}, implying that $S(I_p+\Delta S) \leq S$ and ultimately  $$S(I_p+\Delta S)^{p-1} \leq S.$$  
In addition, we have that $R^\star_S \geq I_p$ and  $SR^\star_S\leq S$ by construction. It follows that $S(R^\star_S)^{p-1}\leq \ldots \leq SR^\star_S\leq S$. Also, according to Theorem~\ref{th:optimal_binary}, we have $R^\star_S \geq R$, $\forall R \geq I_p$ such that $SR^{p-1}\leq S$. By posing $R=I_p+\Delta S$, we have shown above that $SR^{p-1}\leq S$. Hence, $ R^\star_S\geq R=I_p+\Delta S$.}


\emph{ii) $\Rightarrow$ i):  Suppose that $R^\star_S \geq I_p+\Delta S$, which implies $(R^\star_S)^{p-1} \geq (I_p+\Delta S)^{p-1}$. By definition of $R^\star_S$, we have observed that  $S (R^\star_S)^{p-1}\leq S$.  It follows that
\begin{equation} 
\label{eq:QIproof_s2}
S(I_p+\Delta S)^{p-1} \leq S(R^\star_S)^{p-1} \leq S\,.
\end{equation}
Combining~\eqref{eq:QIproof_s2} with the fact that $(I_p+\Delta S) \geq I_p$, we have
\begin{equation*}
S(I_p+\Delta S) \leq S(I_p+\Delta S)^{p-1} \leq S\,.
\end{equation*}
This implies $S\Delta S \leq S$ which is equivalent to QI by \cite[Theorem 26]{rotkowitz2006characterization}.}
\end{proof}
\begin{corollary}\label{co:SI_QI}
 The following statements are equivalent.
\begin{enumerate}[i)]
\item $\text{Sparse}(S)$ is QI with respect to $\mathbf{G}$.
\item $\mathcal{P}_K$ is equivalent to $\mathcal{P}_{S,R^\star_S}$ with $\mathbf{\Gamma}=I_p$, where $R^\star_S$ is the binary matrix generated by Algorithm~\ref{alg:procedure} with $T=S$.
\end{enumerate}
\end{corollary}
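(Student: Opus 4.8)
The plan is to deduce the corollary from Theorem~\ref{th:procedure+QI}, which already shows that $\text{Sparse}(S)$ is QI with respect to $\mathbf{G}$ if and only if $R^\star_S \geq I_p + \Delta S$. It thus suffices to establish the intermediate equivalence
\[
R^\star_S \geq I_p + \Delta S \quad \Longleftrightarrow \quad \mathcal{P}_K \equiv \mathcal{P}_{S,R^\star_S} \text{ with } \mathbf{\Gamma}=I_p,
\]
and then chain it with Theorem~\ref{th:procedure+QI}. The two identities I would rely on throughout are $\mathbf{X}_Q=I_p+\mathbf{GY}_Q$ from~\eqref{eq:XisIplusGY}, and its controller-side counterparts $\mathbf{X}_Q=(I_p-\mathbf{GK})^{-1}$ and $\mathbf{Y}_Q=\mathbf{K}(I_p-\mathbf{GK})^{-1}$, obtained by substituting $\mathbf{K}=\mathbf{Y}_Q\mathbf{X}_Q^{-1}$ into~\eqref{eq:XisIplusGY}.

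For $R^\star_S \geq I_p+\Delta S \Rightarrow \mathcal{P}_K\equiv\mathcal{P}_{S,R^\star_S}$ (which, combined with Theorem~\ref{th:procedure+QI}, gives i)$\Rightarrow$ii)), I would show that the constraint $\mathbf{X}_Q\in\text{Sparse}(R^\star_S)$ is redundant. Every feasible point of the QI reformulation~\eqref{eq:problem_QI} satisfies $\mathbf{Y}_Q\in\text{Sparse}(S)$, whence $\mathbf{GY}_Q\in\text{Sparse}(\Delta S)$ and $\mathbf{X}_Q=I_p+\mathbf{GY}_Q\in\text{Sparse}(I_p+\Delta S)\subseteq\text{Sparse}(R^\star_S)$. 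Hence the feasible sets of $\mathcal{P}_{S,R^\star_S}$ and~\eqref{eq:problem_QI} coincide, and so do their objectives and optimizers. Since $R^\star_S\geq I_p+\Delta S$ is equivalent to QI by Theorem~\ref{th:procedure+QI}, Theorem~\ref{th:QI_Youla} certifies that~\eqref{eq:problem_QI} is equivalent to $\mathcal{P}_K$, and the desired equivalence follows by transitivity.

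For the converse I would argue the contrapositive through achievable controller sets. By Corollary~\ref{co:procedure}, $\mathcal{P}_{S,R^\star_S}$ is a convex restriction of $\mathcal{P}_K$, so its achievable controllers form a subset of $\mathcal{C}_{\text{stab}}\cap\text{Sparse}(S)$; I read ``$\mathcal{P}_K\equiv\mathcal{P}_{S,R^\star_S}$'' as the statement that this inclusion is an equality (equivalently, the reformulation recovers the global optimum for every admissible objective). Assuming $\mathcal{P}_K$ feasible, I would fix a \emph{generic} internally stabilizing $\mathbf{K}\in\text{Sparse}(S)$. Its Youla parameter is unique and $\mathbf{X}_Q=(I_p-\mathbf{GK})^{-1}=I_p+\mathbf{GK}+(\mathbf{GK})^2+\cdots$; genericity precludes cancellations among the terms of this Neumann series, so $\text{Struct}(\mathbf{X}_Q)\geq I_p+\Delta S$. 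If the inclusion were an equality, this $\mathbf{K}$ would be representable within $\mathcal{P}_{S,R^\star_S}$, forcing $\mathbf{X}_Q\in\text{Sparse}(R^\star_S)$ and therefore $I_p+\Delta S\leq R^\star_S$, which is QI by Theorem~\ref{th:procedure+QI}.

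The step I expect to be most delicate is the genericity argument in the converse: I must guarantee the existence of an internally stabilizing $\mathbf{K}\in\text{Sparse}(S)$ for which $\mathbf{GK}$ realizes the full binary support $\Delta S$ with no fortuitous transfer-function cancellations. I would handle this by perturbing any feasible controller inside the linear subspace $\text{Sparse}(S)$: internal stability is an open condition, while exact cancellations confine the offending parameters to a measure-zero set, so a structure-revealing stabilizing controller exists whenever $\mathcal{P}_K$ is feasible. The infeasible case is degenerate---both problems are then vacuously equivalent---and should be excluded from the statement or treated separately.
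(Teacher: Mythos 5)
Your forward direction i)\,$\Rightarrow$\,ii) is essentially the paper's own argument: by Theorem~\ref{th:procedure+QI}, QI gives $R^\star_S \geq I_p+\Delta S$, so $\mathbf{X}_Q=I_p+\mathbf{GY}_Q\in\text{Sparse}(I_p+\Delta S)\subseteq\text{Sparse}(R^\star_S)$ whenever $\mathbf{Y}_Q\in\text{Sparse}(S)$, the $\mathbf{X}_Q$-constraint is redundant, and $\mathcal{P}_{S,R^\star_S}$ collapses to the QI reformulation \eqref{eq:problem_QI}, which is equivalent to $\mathcal{P}_K$. Where you genuinely diverge is the converse. The paper outsources it entirely: it invokes the known result of \cite{sabuau2014youla,QIconvexity} that \eqref{eq:problem_QI} is equivalent to $\mathcal{P}_K$ \emph{if and only if} QI holds, so that ii)\,$\Rightarrow$\,i) follows from the necessity half of that cited theorem without any new construction. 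You instead build a direct witness: a generic stabilizing $\mathbf{K}\in\text{Sparse}(S)$ whose (unique) Youla data satisfies $\text{Struct}(\mathbf{X}_Q)\geq I_p+\Delta S$, so that equality of achievable controller sets forces $I_p+\Delta S\leq R^\star_S$ and hence QI via Theorem~\ref{th:procedure+QI}. Your route is more self-contained and does not lean on the QI-necessity literature, but it is also the sketchier half: the Neumann-series expansion of $(I_p-\mathbf{GK})^{-1}$ need not converge (you should argue structurally, e.g.\ via the cofactor formula or by adapting the perturbation construction in the second statement of Lemma~\ref{le:grows_fullest}, which is exactly the paper's tool for such ``no fortuitous cancellation'' claims), and you correctly note that you need feasibility of $\mathcal{P}_K$ and an openness argument for the stabilizing set. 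Both you and the paper implicitly read ``equivalent'' as ``same achievable controller set'' rather than merely ``same optimal value''; under that reading your argument goes through, and it has the minor advantage of making explicit the nondegeneracy hypotheses that the paper's citation-based converse leaves hidden.
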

\begin{proof}
\emph{It is well-known \cite{sabuau2014youla,QIconvexity} that (\ref{eq:problem_QI}) is equivalent to $\mathcal{P}_K$ if and only if QI holds. It remains to show that $\mathcal{P}_{S,R^\star_S}$ is equivalent to (\ref{eq:problem_QI}) if and only if QI holds. }

\emph{We first show that  $\mathbf{X}_Q$  lies in $\text{Sparse}(I_p+\Delta S)$ for every $\mathbf{Q} \in \mathcal{RH}_\infty^{m \times p}$ such that $\mathbf{Y}_Q \in \text{Sparse}(S)$. Indeed, by (\ref{eq:XisIplusGY}) we have $\mathbf{X}_Q=I_p+\mathbf{GY}_Q$ for every $\mathbf{Q} \in \mathcal{RH}_\infty^{m \times p}$ and thus $\mathbf{X}_Q \in \text{Sparse}(I_p+\Delta S)$. We have shown in Theorem~\ref{th:procedure+QI}  that QI is equivalent to $R^\star_S \geq I_p+\Delta S$, where $R^\star_S$ is generated by Algorithm~\ref{alg:procedure}. It follows that the constraint $\mathbf{Y}_Q \mathbf{\Gamma}=\mathbf{Y}_Q \in \text{Sparse}(S)$ makes the constraint $\mathbf{X}_Q\mathbf{\Gamma}=\mathbf{X}_Q \in \text{Sparse}(R^\star_S)$ redundant and thus $\mathcal{P}_{S,R^\star_S}$ with $\mathbf{\Gamma}=I_p$ is equivalent to (\ref{eq:problem_QI}). This concludes the proof.}
\end{proof}

Essentially, Theorem~\ref{th:procedure+QI} shows that QI is equivalent to $R^\star_S \geq I_p+\Delta S$. Since $\mathbf{X}_Q \in \text{Sparse}(I_p+\Delta S)$ by (\ref{eq:XisIplusGY}) when $\mathbf{Y}_Q \in \text{Sparse}(S)$, the constraint $\mathbf{X}_Q \in \text{Sparse}(R^\star_S)$ becomes redundant if and only if QI holds and the convex program we obtain with SI, namely $\mathcal{P}_{S,R^\star_S}$ with $\mathbf{\Gamma}=I_p$, is equivalent to $\mathcal{P}_K$ due to the results of \cite{rotkowitz2006characterization}. 

Theorems~\ref{th:sparsity_invariance},~\ref{th:optimal_binary} and~\ref{th:procedure+QI}, and Corollaries~\ref{co:SI}--\ref{co:SI_QI} can be summarized as follows. 
\begin{enumerate}
    \item Given any distributed sparsity-constrained control problem $\mathcal{P}_K$, one can always cast and solve its convex restriction $\mathcal{P}_{S,R^\star_S}$, where $R^\star_S$ is generated by Algorithm~\ref{alg:procedure}. 
\item If $\mathcal{P}_{S,R^\star_S}$ is feasible, its optimal solution is also feasible for $\mathcal{P}_K$, and is certified to be globally optimal if $\text{Sparse}(S)$ is QI with respect to $\mathbf{G}$. 
\end{enumerate}
We remark that verifying QI is optional and can be done a-posteriori to check global optimality of the solution, but QI is not part of the controller design procedure in the SI approach. Hence, Theorem~\ref{th:procedure+QI} expands the applicability of convex programming to compute distributed controllers for arbitrary systems and sparsity patterns, while maintaining previous global optimality results.

\begin{example}
\label{ex:QI_theorem}
\emph{Consider the unstable system and the sparsity pattern $S$ of Example~\ref{ex:1}. We can verify that $S\Delta S  \not \leq S$, where $\Delta=\text{Sparse}(\mathbf{G})$, and hence $\text{Sparse}(S)$ is not QI with respect to $\mathbf{G}$. Instead, let us consider  the new  sparsity pattern
\begin{equation}
\label{eq:S2}
S_2=\begin{bmatrix}
0&0&0&0&0\\0&1&0&0&0\\0&1&1&0&0\\0&1&1&1&0\\0&1&1&1&1
\end{bmatrix}\,.
\end{equation}
We can verify that $S_2\Delta S_2 \leq S_2$. Hence, $\text{Sparse}(S_2)$ is QI with respect to $\mathbf{G}$.  By applying Algorithm~\ref{alg:procedure} we obtain 
\begin{align*}
R^\star_{S_2}\hspace{-0.1cm}&=\hspace{-0.1cm}\begin{bmatrix}
1&1&1&1&1\\
0&1&0&0&0\\
0&1&1&0&0\\
0&1&1&1&0\\
0&1&1&1&1
\end{bmatrix}\hspace{-0.1cm}\,, ~I_p+\Delta S_2\hspace{-0.1cm}=\hspace{-0.1cm}\begin{bmatrix}
1&0&0&0&0\\
0&1&0&0&0\\
0&1&1&0&0\\
0&1&1&1&0\\
0&1&1&1&1
\end{bmatrix}\,,\\
R^\star_S\hspace{-0.1cm}&=\hspace{-0.1cm}\begin{bmatrix}
1&0&0&0&0\\
0&1&0&0&0\\
0&1&1&0&0\\
0&1&1&1&0\\
0&1&1&1&1
\end{bmatrix}\hspace{-0.1cm}\,,~~I_p+\Delta S\hspace{-0.1cm}=\hspace{-0.1cm}\begin{bmatrix}
1&0&0&0&0\\
\textcolor{red}{1}&1&0&0&0\\
\textcolor{red}{1}&1&1&0&0\\
\textcolor{red}{1}&1&1&1&0\\
\textcolor{red}{1}&1&1&1&1
\end{bmatrix}\,.
\end{align*}
In accordance with Theorem~\ref{th:procedure+QI} we have that $R^\star_{S_2} \geq I_p+\Delta S_2$, but $R^\star_S \not \geq I_p+\Delta S$ (see the entries highlighted in red). By Corollary~\ref{co:SI_QI}, we conclude that the convex program $\mathcal{P}_{S_2,R^\star_{S_2}}$ with $\mathbf{\Gamma}=I_p$ is equivalent to $\mathcal{P}_K$  with the sparsity constraint $\mathbf{K} \in \text{Sparse}(S_2)$, while $\mathcal{P}_{S,R^\star_S}$ is a convex restriction of $\mathcal{P}_K$ for every invertible $\mathbf{\Gamma} \in \mathcal{R}_c^{p \times p}$.}
\end{example}

Next, we show that SI generalizes the class of restrictions of \cite{rotkowitz2012nearest}, based on finding QI subsets of $\text{Sparse}(S)$ which are nearest to $\text{Sparse}(S)$. The result is a straightforward corollary of Theorem~\ref{th:procedure+QI}.
  \begin{corollary}
  \label{co:sparsity_subsets}
Let $\text{Sparse}(T_{\text{QI}})\subseteq \text{Sparse}(S)$ be QI with respect to $\mathbf{G}$ and let  $\|S\|_0-\|T_{\text{QI}}\|_0$ be minimal as proposed in \cite{rotkowitz2012nearest}.   Then, there exists $T\leq S$ such that $J^\star \leq J_{\text{QI}}$, where $J^\star$ is the minimum cost of $\mathcal{P}_{T,R^\star_T}$ with $\mathbf{\Gamma}=I_p$, and $J_{\text{QI}}$ is the minimum cost of problem (\ref{eq:problem_QI}) with the constraint $\mathbf{Y}_Q \in \text{Sparse}(S)$ replaced by $\mathbf{Y}_Q \in \text{Sparse}(T_{\text{QI}})$.
  \end{corollary}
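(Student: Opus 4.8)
The plan is to exhibit $T = T_{\text{QI}}$ as the required sparsity pattern and to show that this choice in fact yields $J^\star = J_{\text{QI}}$, which is stronger than the claimed inequality. First I would note that $T_{\text{QI}} \leq S$ is immediate, since $\text{Sparse}(T_{\text{QI}}) \subseteq \text{Sparse}(S)$ is equivalent to the entrywise domination of the associated binary matrices. Hence $T = T_{\text{QI}}$ is an admissible choice in Corollary~\ref{co:procedure}, and $\mathcal{P}_{T_{\text{QI}},R^\star_{T_{\text{QI}}}}$ with $\mathbf{\Gamma}=I_p$ is a well-defined convex program whose optimal value is $J^\star$.

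The crux is to show that the additional constraint $\mathbf{X}_Q \in \text{Sparse}(R^\star_{T_{\text{QI}}})$ present in $\mathcal{P}_{T_{\text{QI}},R^\star_{T_{\text{QI}}}}$, but absent from the program defining $J_{\text{QI}}$, is redundant. To this end I would invoke Theorem~\ref{th:procedure+QI} with $T_{\text{QI}}$ playing the role of $S$: its statement and proof rely only on the generic properties of Algorithm~\ref{alg:procedure} established in Theorem~\ref{th:optimal_binary} together with the QI characterization $S\Delta S \leq S$ of \cite[Theorem~26]{rotkowitz2006characterization}, so they apply verbatim to any binary matrix in the position of the controller sparsity. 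Since $\text{Sparse}(T_{\text{QI}})$ is QI with respect to $\mathbf{G}$ by hypothesis, I obtain $R^\star_{T_{\text{QI}}} \geq I_p + \Delta T_{\text{QI}}$, where $\Delta = \text{Struct}(\mathbf{G})$.

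Next I would close the redundancy argument using~(\ref{eq:XisIplusGY}): for every $\mathbf{Q} \in \mathcal{RH}_\infty^{m \times p}$ with $\mathbf{Y}_Q \in \text{Sparse}(T_{\text{QI}})$ one has $\mathbf{X}_Q = I_p + \mathbf{G}\mathbf{Y}_Q \in \text{Sparse}(I_p + \Delta T_{\text{QI}}) \subseteq \text{Sparse}(R^\star_{T_{\text{QI}}})$. Thus the constraint on $\mathbf{X}_Q$ is automatically met whenever $\mathbf{Y}_Q \in \text{Sparse}(T_{\text{QI}})$, so the two programs share the same feasible set and objective. Equivalently, this is exactly the content of Corollary~\ref{co:SI_QI} read with $S := T_{\text{QI}}$, which renders $\mathcal{P}_{T_{\text{QI}},R^\star_{T_{\text{QI}}}}$ equivalent to~(\ref{eq:problem_QI}) with $\mathbf{Y}_Q \in \text{Sparse}(T_{\text{QI}})$ — the latter being precisely the program of cost $J_{\text{QI}}$, itself equivalent to minimizing over $\mathbf{K} \in \mathcal{C}_{\text{stab}} \cap \text{Sparse}(T_{\text{QI}})$ by Theorem~\ref{th:QI_Youla}. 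Therefore $J^\star = J_{\text{QI}}$, and a fortiori $J^\star \leq J_{\text{QI}}$.

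The argument is in essence an application of the QI-case equivalence to the matrix $T_{\text{QI}}$, so I do not expect a genuine obstacle. The only point requiring attention is the legitimacy of transporting Theorem~\ref{th:procedure+QI} and Corollary~\ref{co:SI_QI}, both stated for $S$, to $T_{\text{QI}}$; this is justified because those results hold for an arbitrary binary matrix occupying the role of the controller sparsity pattern, and $\text{Sparse}(T_{\text{QI}})$ is QI by construction.
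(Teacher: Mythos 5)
Your proposal is correct and follows essentially the same route as the paper's own proof: choose $T=T_{\text{QI}}$, invoke Theorem~\ref{th:procedure+QI} (with $T_{\text{QI}}$ in place of $S$) to get $R^\star_{T_{\text{QI}}}\geq I_p+\Delta T_{\text{QI}}$, and use~(\ref{eq:XisIplusGY}) to conclude that the constraint on $\mathbf{X}_Q$ is redundant, so that $J^\star=J_{\text{QI}}$. Your explicit remark that Theorem~\ref{th:procedure+QI} applies verbatim to any binary matrix in the role of the controller sparsity is a useful clarification that the paper leaves implicit, but it does not change the substance of the argument.
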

  \begin{proof}
\emph{Let $T=T_{\text{QI}}$. Since $\text{Sparse}(T_{\text{QI}})$ is QI with respect to $\mathbf{G}$, we have $R^\star_T\geq I_p+\Delta T$ by Theorem~\ref{th:procedure+QI}. Hence, for every $\mathbf{Y}_Q\mathbf{\Gamma}=\mathbf{Y}_Q \in \text{Sparse}(T)$, the matrix $\mathbf{X}_Q=I_p+\mathbf{GY}_Q$ belongs to $\text{Sparse}(I_p+\Delta T)$ for every $\mathbf{Q} \in \mathcal{R}_\infty^{m\times p}$ and the constraint $\mathbf{X}_Q \mathbf{\Gamma}=\mathbf{X}_Q \in \text{Sparse}(R^\star_T)$ is redundant. It follows that  the choice $T=T_{\text{QI}}$ achieves $J^\star=J_{\text{QI}}$. Therefore, there exists a choice of $T$ such that the optimal solution of $\mathcal{P}_{T,R_T^\star}$ with $\mathbf{\Gamma}=I_p$  performs at least as well as that of the problem obtained by considering a nearest QI subset as suggested in \cite{rotkowitz2012nearest}. This completes our proof.}
  \end{proof}
  
  Corollary~\ref{co:sparsity_subsets} proves that the class of convex restrictions considered in \cite{rotkowitz2012nearest} is a special case in the framework of SI, obtained by choosing $T=T_{\text{QI}}$ and computing $R^\star_{T_{\text{QI}}}$ with our Algorithm~\ref{alg:procedure}. Furthermore, it is possible to choose $T\leq S$ to obtain strictly more performing convex restrictions, as we will show numerically in Section~\ref{se:numerical}.

  \subsection{Strictly Beyond QI}
\label{sub:strictly_beyond}
  So far, we have shown that the SI approach naturally recovers the previous QI results of \cite{rotkowitz2006characterization} and \cite{rotkowitz2012nearest} as specific cases by using Algorithm~\ref{alg:procedure}. Here and in Section~\ref{se:numerical}, we show through examples the stronger results that 
\begin{enumerate}
\item SI can recover globally optimal solutions when QI does \emph{not} hold,
\item   strictly better performance than the approach of \cite{rotkowitz2012nearest} can be obtained.
\end{enumerate}  
For point 2), we refer to the numerical results in Section~\ref{se:numerical}. For point 1), we consider an example taken from \cite{wang2019system}.
\begin{example}
\label{ex:Globally_Optimal}
\emph{Consider the optimal control problem:
\begin{alignat*}{3}
&\minimize_{\mathbf{K}(z)} &&\lim_{L \rightarrow \infty} \frac{1}{L} \sum_{t=0}^L \mathbb{E}||x(t)||_2^2\\
	&\text{subject to}&&~x(t+1)=Ax(t)+u(t)+w(t)\,,\\
	&~&&~u(z)=\mathbf{K}(z)x(z)\,, \quad \mathbf{K}(z) \in \text{Sparse}(A^{\text{bin}})\,,
\end{alignat*}
 where $z \in e^{j \mathbb{R}}$, $A \in \mathbb{R}^{n \times n}$, $A^{\text{bin}}=\text{Struct}(A)$ and  $w(t)$ denotes i.i.d. disturbances distributed according to a normal  distribution $\mathcal{N}(0_{n\times 1},I_n)$. The discrete-time transfer function of this system is $\mathbf{G}(z)=(zI_p-A)^{-1}$. This problem without the sparsity constraint on $\mathbf{K}$ is known as the LQR  problem. By adding the sparsity constraint, it is an instance of $\mathcal{P}_K$ in discrete-time. Notice that QI does not hold whenever the graph defined by $A$ is strongly connected because $\Delta=\text{Struct}(\mathbf{G}(z))=\text{Struct}\left((zI_n-A)^{-1}\right)$ is equal to $1_{n \times n}$ in general, and so $A^{\text{bin}}\Delta A^{\text{bin}} \not \leq A^{\text{bin}}$ thus violating QI.}
 
   \emph{The reason to consider a discrete-time instance of $\mathcal{P}_K$ is that one can solve analytically the corresponding problem where sparsity constraints are removed by computing a simple Riccati equation \cite{lancaster1995algebraic}. It so happens that the optimal solution for this problem is $\mathbf{K}(z)=-A$, which is also feasible and hence globally optimal for $\mathcal{P}_K$.  Now, consider problem $\mathcal{P}_{T,R}$ with $\mathbf{\Gamma}(z)=\mathbf{G}(z)$, $T=A^{\text{bin}}$ and $R=R^\star_{A^{\text{bin}}}$. We can verify that a feasible solution for $\mathcal{P}_{T,R}$ is $\mathbf{Y}_Q(z)=-\frac{A}{z}(zI_n-A)$, because $$\mathbf{Y}_Q\mathbf{\Gamma}=\mathbf{Y}_Q(zI_n-A)^{-1}=-\frac{A}{z} \in \text{Sparse}(A^{\text{bin}})\,.$$ This  implies $\mathbf{X}_Q(z)=I_n-\frac{A}{z}$ by (\ref{eq:XisIplusGY}). Hence, $\mathbf{X}_Q(z) \mathbf{\Gamma}(z)=\mathbf{X}_\mathbf{Q}(z)(zI_n-A)^{-1}=\frac{I_n}{z}$. Since $R^\star_{A^{\text{bin}}}\geq I_n$ by design (see Algorithm~\ref{alg:procedure}), we have $\mathbf{X}_Q(z) \mathbf{\Gamma}(z) \in \text{Sparse}(R^\star_{A^{\text{bin}}})$ as desired. It is immediate to verify that the resulting controller is $\mathbf{K}(z)=\mathbf{Y}_\mathbf{Q}(z)\mathbf{X}_Q(z)^{-1}=-A$. We conclude that, despite a lack of QI, a convex approximation which contains the global optimum of $\mathcal{P}_K$ is found by using the proposed SI approach.}
\begin{remark}
\emph{The global optimality result for this example was also obtained using the SLP in~\cite{wang2019system}. The sparsities for the system level parameters in \cite{wang2019system} were chosen empirically, while we provide an explicit methodology based on the SI condition (\ref{eq:TR<S}) and Algorithm~\ref{alg:procedure}. Furthermore, we wish to clarify that obtaining global optimality certificates for $\mathcal{P}_K$ for systems with non-QI constraints is still an open problem, which is not addressed neither by the system level approach \cite{wang2019system} nor by our SI approach. Both our approach and that of \cite{wang2019system} can certify optimality of the solution because the optimal solution of this simple instance is already known analytically.}
\end{remark}

\end{example}

\subsection{SI for static controller design}
\label{sub:static}
We conclude this section by highlighting another advantage of the SI notion over the QI notion; the SI notion can be used to compute sparse static control policies in a convex way, that is policies in the form $u(t)=Ky(t)$ where $K$ is a \emph{real} matrix in $\text{Sparse}(S)$. This topic has been thoroughly studied in our earlier work \cite{furieri2019separable}, where we derived a notion of SI limited to the static controller case. Here, we highlight that in contrast to the QI notion, SI is useful both for static and dynamic sparse controller design. 

The main observation is that the Youla parametrization cannot achieve a convexification of the static controller design problem in general, because enforcing $\mathbf{K}=(\mathbf{V}_r-\mathbf{M}_r\mathbf{Q})(\mathbf{U}_r-\mathbf{N}_r\mathbf{Q})^{-1}$ to be a $\emph{real}$ matrix is a non-convex requirement on the \emph{transfer} matrix $\mathbf{Q}$. Consequently, a different parametrization should be used and the QI property, tightly linked to using a Youla-like parametrization, will not be relevant anymore. The most well-known techniques to convexify the $\mathcal{H}_2$ and $\mathcal{H}_\infty$ norm-optimal state-feedback static controller design problems are based on computing appropriate quadratic Lyapunov functions through Linear Matrix Inequalities (LMI); see \cite{boyd1994linear,scherer2000linear} for a comprehensive review. The more general case of static output-feedback is known to be NP-hard \cite{Survey} and an exact convex formulation does not exist.

As we illustrated in \cite{furieri2019separable}, when the distributed static control problem is formulated through LMIs, the controller is recovered as $K=YX^{-1}$, where $Y$ and $X$ are real decision variables, $X$ is symmetric positive semidefinite and $V(x)=x^\mathsf{T}X^{-1}x$ is a quadratic Lyapunov function for the closed-loop system.  If the controller must lie in a sparsity subspace $\text{Sparse}(S)$, the only source of non-convexity stems from requiring that $YX^{-1}\in \text{Sparse}(S)$. This expression for the static controller in terms of the decision variables matches that of $\mathbf{K}=\mathbf{Y}_Q\mathbf{X}_Q^{-1}$, which is valid for dynamic controllers in terms of the Youla parameter. According to Theorem~\ref{th:sparsity_invariance} and Corollary~\ref{co:SI}, convex restrictions can be obtained by choosing binary matrices $T$ and $R$ as per (\ref{eq:TR<S}) that satisfy the SI condition (\ref{eq:sparsity_invariance}), and requiring that $Y\Gamma \in \text{Sparse}(T)$ and $X\Gamma \in \text{Sparse}(R)$ for any invertible real matrix $\Gamma \in \mathbb{R}^{n \times n}$. We refer the interested reader to \cite{furieri2019separable} for details. 

Based on the discussion above, SI  is a framework-independent notion which deals with sparsity patterns. Specifically, the SI notion translates, separately, to generalizations of QI-based synthesis of sparse dynamic controllers and of block-diagonal quadratic Lyapunov functions for designing sparse static controllers.

\section{Experiments}
\label{se:numerical}
With the goal of providing insight into our proposed method and showing its potential benefits when combined with standard controller design techniques, we continue here our Example~\ref{ex:1} and provide numerical results.


\subsection{Finite-dimensional approximation}
Since the convex programs we have cast are infinite-dimensional, due to the decision variables being transfer matrices whose order is not fixed, it is necessary to resort to finite-dimensional approximations. When using the Youla parametrization in continuous-time, one can adapt the semidefinite programming technique of \cite{alavian2013q} to the $\mathcal{H}_2$ norm by exploiting standard results from \cite{scherer2000efficient,scherer2000linear}; when using the SLP or IOP parametrizations in discrete-time, one can use the corresponding finite impulse response (FIR) approximations of \cite{wang2019system,furieri2019input}. The key common idea behind these approximations is to express each decision variable $\mathbf{U}$, which is a general stable  transfer matrix in continuous-time (resp. discrete-time), in the approximated form 
\begin{equation}
\label{eq:Q_finite_dim}
\mathbf{U}=\sum_{i=0}^NU[i](s+a)^{-i}\,, \quad \left( \text{ resp. }\mathbf{U}=\sum_{i=0}^NU[i]z^{-i}\right)\,,
\end{equation} 
for some $N \in \mathbb{N}$ and $a \in \mathbb{R}$ with $a>0$. The real matrices $U[i]$ for all $i$ become the finitely many real decision variables to optimize over. The approximation \eqref{eq:Q_finite_dim} is based on the well-known idea of Ritz approximations~\cite{boyd1991linear} and we refer the reader to \cite{wang2019system,furieri2019input} for details on SLP and IOP.


\emph{Example~\ref{ex:1} (continued)} We will address the distributed controller design problem formulated in Example~\ref{ex:1} both in discrete- and continuous-time. We have observed in Example~\ref{ex:QI_theorem} that $\text{Sparse}(S)$ is not QI with respect to $\mathbf{G}$. As we have  summarized in Section~\ref{sub:conv_restrictions_QI},  \cite{rotkowitz2012nearest} suggests identifying a binary matrix $T_{\text{QI}} < S$ such that $\text{Sparse}(T_{\text{QI}})$ is QI with respect to $\mathbf{G}$ and $\|S\|_0-\|T_{\text{QI}}\|_0$ is minimized. In this case,  we verify by inspection that $S_2$ in (\ref{eq:S2}) is the only QI sparsity pattern $T_{\text{QI}}$ such that $\|S\|_0-\|T_{\text{QI}}\|_0\leq 2$.  As suggested in \cite{rotkowitz2012nearest}, we can thus substitute the constraint $\mathbf{Y}_Q(\mathbf{X}_Q)^{-1} \in \text{Sparse}(S)$ with $\mathbf{Y}_Q \in \text{Sparse}(S_2)$ and the corresponding convex program is a restriction of $\mathcal{P}_K$. Our goal is to compare tightness of this convex restriction with that of $\mathcal{P}_{S,R^\star_S}$ obtained through SI. 
\subsection{Numerical Results}

 As outlined above, we solved finite-dimensional approximations of the convex restriction proposed in \cite{rotkowitz2012nearest} and of our convex restriction $\mathcal{P}_{S,R^\star_S}$ with $\mathbf{\Gamma}=I_p$ obtained through SI. All the numerical programs were solved with MOSEK \cite{mosek}, called through MATLAB via YALMIP \cite{YALMIP}, on a standard laptop computer.

 \subsubsection{IOP in discrete-time} In our first experiment we considered the discrete-time version of $\mathbf{G}$. Since the approach of \cite{alavian2013q} requires finding an initial stable and stabilizing controller in $\text{Sparse}(S)$ heuristically, which is no trivial task in general, we used the IOP parametrization \cite{furieri2019input} and the discrete-time finite-dimensional approximation \eqref{eq:Q_finite_dim} for all decision variables. Using the notation of \cite{zheng2019equivalence}, where $\mathbf{K}=\mathbf{UY}^{-1}$ and $\mathbf{U}$, $\mathbf{Y}$ are input-output parameters, the closest QI subset approach of \cite{rotkowitz2012nearest} requires $\mathbf{U} \in \text{Sparse}(S_2)$, while our SI approach translates to $\mathbf{U} \in \text{Sparse}(S)$ and $\mathbf{Y} \in \text{Sparse}(R_S^\star)$. Within this setting, no feasible solution could be obtained using the closest QI subset approach; instead, upon convergence over $N$, we obtained a cost of $6.7278$ using the proposed SI approach. To evaluate the suboptimality, we additionally solved for the nearest QI \emph{superset} of $S$ defined as the binary matrix $S_3\geq S$ such that $S_3$ is QI and $\|S_3\|_0-\|S\|_0$ is minimized \cite{rotkowitz2012nearest}; the corresponding optimal cost serves as a lower bound for that of $\mathcal{P}_K$. The QI superset is unique  and is computed with the algorithm (13)-(14) of \cite{rotkowitz2012nearest}. It turns out that $S_3$ is the full lower-triangular matrix. By solving for $S_3$ we obtained the lower bound $6.7268$ upon convergence over $N$, and hence the SI solution has near-optimal performance. 

\subsubsection{Youla in continuous-time} In our second experiment we considered the continuous-time version of $\mathbf{G}$ and used the finite-dimensional approximation technique of \cite{alavian2013q}. A doubly-coprime factorization of $\mathbf{G}$  was computed as per  \cite[Theorem~17]{rotkowitz2006characterization} using the stable and stabilizing controller $\mathbf{K}_{\text{nom}}$ suggested in \cite[Page 1995]{rotkowitz2006characterization}. In (\ref{eq:Q_finite_dim}), we chose $a>0$ and increased the value of $N$ until the improvement on the cost was negligible, thus approaching convergence to the optimal cost of the infinite-dimensional program.  Upon convergence over $N$, the closest QI subset method of \cite{rotkowitz2012nearest} led to a cost of $7.3367$ while the SI method led to a cost of $7.3098$. To evaluate this improvement in performance, we additionally solved for  $S_3$ and obtained a lower bound of $7.2163$. We conclude that our SI solution has a relative improvement over that of \cite{rotkowitz2012nearest} based on QI subsets of at least $\frac{7.3367-7.3098}{7.3367-7.2163}=22.3\%$.

\section{Conclusions}
\label{se:conclusion}
We have proposed the framework of Sparsity Invariance (SI) for convex design of optimal and near-optimal sparse controllers.  One main insight is that the proposed SI approach offers a direct generalization of previous design methods based on the notion of Quadratic Invariance (QI).  Indeed, SI can be directly applied to any systems and sparsity constraints.  The recovered solution is globally optimal when QI holds and performs at least as well as the nearest QI subset when QI does not hold. We have shown the potential benefits of SI over previous methods through examples, and remarked that SI is naturally applicable to sparse static controller design.

Since the condition (\ref{eq:TR<S}) is necessary and sufficient for the SI property (\ref{eq:sparsity_invariance}), our results approach the limits in performance of convex restrictions of the sparsity constrained control problem based on structural conditions for the Youla parameter. This opens up the question of whether  different and more performing design methodologies can be developed for this challenging problem. Another direction for research is to further refine the SI approach, by developing tractable heuristics to optimally design the binary matrices $T$ and $R$ and the parameter $\mathbf{\Gamma}$ simultaneously based on the knowledge of the system $\mathbf{P}$. This could potentially improve upon Algorithm~\ref{alg:procedure}. Finally, it would be relevant to extend the SI idea to the case of delay constraints; in discrete-time, this might be possible by refining the results of \cite{furieri2018robust}. 

\appendix
\section{Appendix}
\subsection{Proof of Theorem~\ref{th:sparsity_invariance}}
\label{appsub:theorem}

\setcounter{lemma}{0}
     \renewcommand{\thelemma}{\Alph{section}\arabic{lemma}}
The proof relies on two Lemmas, whose proofs are reported in Appendix~\ref{appsub:le:grows_fullest} and Appendix~\ref{appsub:le:T}.
\begin{lemma}
\label{le:grows_fullest}
Let $R \in \{0,1\}^{p \times p}$  with $R \geq I_p$. Then, 
\begin{enumerate}
\item For any invertible transfer matrix $\mathbf{X}$ in $\text{Sparse}\left(R\right)$,
\begin{equation*}
\text{Struct}\left(\mathbf{X}^{-1}\right) \leq R^{p-1}\,.
\end{equation*}
\item There exists an invertible transfer matrix $\mathbf{X}\in \text{Sparse}(R)$ such that
\begin{equation*}
\text{Struct}\left(\mathbf{X}^{-1}\right)=R^{p-1}\,.
\end{equation*}
\end{enumerate}
\end{lemma}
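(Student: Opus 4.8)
The plan is to prove both parts by reasoning about the sparsity structure of $\mathbf{X}^{-1}$ through the adjoint (cofactor) formula $\mathbf{X}^{-1} = \frac{1}{\det \mathbf{X}}\,\mathrm{adj}(\mathbf{X})$, and by interpreting powers of the binary matrix $R$ as reachability in the directed graph $\mathcal{G}(R)$ whose adjacency matrix is $R$. The key combinatorial fact I would establish first is that, since $R \geq I_p$, the binary powers form a nondecreasing chain $R \leq R^2 \leq \cdots$ that stabilizes at $R^{p-1}$, which equals the transitive closure (reachability matrix) of $\mathcal{G}(R)$; here $(R^{p-1})_{ij}=1$ iff there is a directed path from $j$ to $i$ in $\mathcal{G}(R)$, and paths need never exceed length $p-1$ on $p$ nodes. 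This is the structural object I want to match against $\mathrm{Struct}(\mathbf{X}^{-1})$.

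For part (1), I would argue entrywise: the $(i,j)$ entry of $\mathrm{adj}(\mathbf{X})$ is (up to sign) the $(j,i)$ minor of $\mathbf{X}$, a sum of signed products of $p-1$ entries of $\mathbf{X}$, one from each row and column except row $j$ and column $i$. If $(R^{p-1})_{ij}=0$, meaning there is no directed path from $j$ to $i$ in $\mathcal{G}(R)$, then every term in this cofactor expansion must contain at least one factor $\mathbf{X}_{k\ell}$ with $R_{k\ell}=0$ (hence $\mathbf{X}_{k\ell}=0$, as $\mathbf{X}\in\mathrm{Sparse}(R)$), forcing the minor to vanish identically. The combinatorial heart is the claim that a permutation-type product covering the index set $\{1,\dots,p\}\setminus\{j\}$ (rows) paired with $\{1,\dots,p\}\setminus\{i\}$ (columns), using only edges present in $R$, would induce a path from $j$ to $i$; absence of such a path kills every product. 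Consequently $(\mathbf{X}^{-1})_{ij}=0$ for almost all $\omega$, giving $\mathrm{Struct}(\mathbf{X}^{-1})\leq R^{p-1}$.

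For part (2), the plan is a genericity/existence argument: I would exhibit (or show the existence of) a particular invertible $\mathbf{X}\in\mathrm{Sparse}(R)$ whose inverse attains the full pattern $R^{p-1}$. A clean route is to choose scalar transfer-function entries on the support of $R$ that are algebraically independent (e.g. distinct monomials $1/(s+c_{k\ell})$), so that no accidental cancellation occurs among the cofactor terms. Then, whenever $(R^{p-1})_{ij}=1$, at least one surviving product of nonzero $\mathbf{X}$-entries exists (corresponding to a genuine path $j\to i$ together with self-loops from $R\geq I_p$ filling the remaining diagonal), and generic independence guarantees the corresponding minor is not identically zero, so $(\mathbf{X}^{-1})_{ij}\neq 0$. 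Taking $\mathbf{X}$ near $I_p$ (e.g.\ $\mathbf{X}=I_p+\epsilon\,\mathbf{N}$ with $\mathbf{N}\in\mathrm{Sparse}(R)$ capturing the off-diagonal support) keeps $\det\mathbf{X}$ away from zero and makes invertibility transparent.

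The main obstacle I anticipate is part (2): ruling out pathological cancellations that could make some attainable entry of $\mathbf{X}^{-1}$ vanish despite the existence of a path. Showing the pattern is achieved, not merely bounded, requires the genericity argument to be watertight — the cofactor determinants are polynomials in the chosen entries, and I must certify that at least one such polynomial is not the zero polynomial for each target entry $(i,j)$ with $(R^{p-1})_{ij}=1$. Establishing that a single "path plus self-loops" diagonal term survives as a genuine nonzero monomial (with no other term able to cancel it under algebraically independent choices) is the delicate step; the upper-bound direction in part (1) is comparatively routine once the graph-theoretic reading of $R^{p-1}$ is in place.
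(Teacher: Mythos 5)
Your proposal is correct in substance but takes a genuinely different route from the paper on both parts. For part (1), the paper invokes the Cayley--Hamilton theorem over the commutative ring of causal transfer functions to write $\mathbf{X}^{-1}=-\lambda_0^{-1}(\lambda_1 I_p+\lambda_2\mathbf{X}+\cdots+\lambda_p\mathbf{X}^{p-1})$ and then reads off $\text{Struct}(\mathbf{X}^{-1})\leq R^{p-1}$ from $R\geq I_p$; you instead argue through the adjugate, using the standard fact that a nonvanishing term of the $(j,i)$ minor corresponds to a path from $j$ to $i$ plus vertex-disjoint cycles in $\mathcal{G}(R)$. Your route is more elementary and makes the graph-theoretic meaning of $R^{p-1}$ explicit, at the cost of a slightly more delicate combinatorial step; the paper's is shorter but leans on Cayley--Hamilton over a ring. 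For part (2), the paper gives an explicit iterative construction starting from $\mathbf{X}=I_p$, adding rank-one terms $\alpha e_ie_j^{\mathsf{T}}$ and using the Sherman--Morrison identity to show each update merges row structures of $\mathbf{X}^{-1}$ until $R^{p-1}$ is attained; your genericity argument is non-constructive but cleaner, and it does close: distinct bijections in the minor expansion yield distinct monomials in the entries, so the ``path plus self-loops'' term cannot be cancelled, each relevant minor is a nonzero polynomial, and a choice avoiding the zero sets of these finitely many polynomials (plus that of $\det\mathbf{X}$) exists. One small repair: the functions $1/(s+c_{k\ell})$ for distinct $c_{k\ell}$ are \emph{not} algebraically independent (e.g.\ $x-y=(c_2-c_1)xy$), but you do not need algebraic independence --- generic real constants on the support of $R$ already avoid the finitely many proper algebraic sets in question, and constants are valid causal transfer functions.
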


\begin{lemma}
\label{le:T}
Let $T \in \{0,1\}^{m \times p}$ and $R \in \{0,1\}^{p \times p}$, and  $\text{Struct}(
\mathbf{W})=R$. Then, there exists $\mathbf{Z} \in \text{Sparse}(T)$ such that
$$\text{Struct}(\mathbf{ZW})=TR\,.$$
\end{lemma}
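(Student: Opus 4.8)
The plan is to split the claim into the easy upper bound $\text{Struct}(\mathbf{ZW}) \leq TR$, which holds for \emph{every} $\mathbf{Z} \in \text{Sparse}(T)$, and the harder assertion that some admissible $\mathbf{Z}$ attains the maximal structure $TR$. For the upper bound I would observe that $(\mathbf{ZW})_{ik} = \sum_{j=1}^p \mathbf{Z}_{ij}\mathbf{W}_{jk}$, and that each summand can be nonzero only when $\mathbf{Z}_{ij} \neq 0$ (forcing $T_{ij}=1$, since $\mathbf{Z} \in \text{Sparse}(T)$) and $\mathbf{W}_{jk} \neq 0$ (forcing $R_{jk}=1$, since $\text{Struct}(\mathbf{W})=R$); hence $(\mathbf{ZW})_{ik}$ can be nonzero only if $(TR)_{ik}=1$. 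The entire content of the lemma is therefore to exhibit a single $\mathbf{Z} \in \text{Sparse}(T)$ for which \emph{no cancellation} occurs, so that $(\mathbf{ZW})_{ik} \neq 0$ for every index with $(TR)_{ik}=1$.

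Since the $i$-th row of $\mathbf{ZW}$ depends only on the $i$-th row of $\mathbf{Z}$, I would build $\mathbf{Z}$ one row at a time, the rows being chosen independently. For a fixed row $i$, let $J_i = \{ j : T_{ij}=1 \}$ index the free positions, and restrict the search to \emph{constant} (hence causal and proper) entries $\mathbf{Z}_{ij} = c_{ij} \in \mathbb{R}$ for $j \in J_i$, with $\mathbf{Z}_{ij}=0$ otherwise. Then $(\mathbf{ZW})_{ik} = \sum_{j \in J_i} c_{ij} \mathbf{W}_{jk}$ is an $\mathbb{R}$-linear function of the coefficient vector $c_i := (c_{ij})_{j \in J_i}$. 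For each $k$ with $(TR)_{ik}=1$ there is some $j \in J_i$ with $R_{jk}=1$, i.e. $\mathbf{W}_{jk} \neq 0$; consequently the linear map $c_i \mapsto \sum_{j \in J_i} c_{ij}\mathbf{W}_{jk}$ is not identically zero, and its kernel $V_k := \{ c_i : \sum_{j \in J_i} c_{ij}\mathbf{W}_{jk} \equiv 0 \}$ is a \emph{proper} linear subspace of $\mathbb{R}^{|J_i|}$.

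The remaining step, which I expect to be the crux, is to choose $c_i$ outside every forbidden subspace simultaneously. The bad coefficient vectors form the finite union $\bigcup_k V_k$ taken over the finitely many $k$ with $(TR)_{ik}=1$; since a finite union of proper subspaces of a vector space over the infinite field $\mathbb{R}$ cannot exhaust the whole space, a valid $c_i$ exists. For such a choice each $\sum_{j \in J_i} c_{ij}\mathbf{W}_{jk}$ is a nonzero rational function and hence is nonzero for almost all $\omega$, so that $(\text{Struct}(\mathbf{ZW}))_{ik}=1$ precisely when $(TR)_{ik}=1$. Performing this selection for every row $i$ yields a $\mathbf{Z} \in \text{Sparse}(T)$ with $\text{Struct}(\mathbf{ZW}) = TR$, as required. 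The only point demanding genuine care is this non-degeneracy argument: one must verify that the cancellation set for each target entry is a genuine \emph{linear} subspace (which it is, because the entries of $\mathbf{Z}$ enter $\mathbf{ZW}$ linearly) and that it is proper (which follows from $(TR)_{ik}=1$ guaranteeing at least one nonzero $\mathbf{W}_{jk}$).
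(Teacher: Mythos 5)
Your proof is correct, and it takes a route that differs from the paper's in an instructive way. The paper's argument starts from an \emph{arbitrary} $\mathbf{Z}\in\text{Sparse}(T)$ and repairs it iteratively: whenever $(\mathbf{ZW})_{ij}=0$ while $(TR)_{ij}=1$, it adds a perturbation $\mathbf{\alpha}\,e_ie_k^{\mathsf{T}}$ with $\mathbf{\alpha}\in\mathcal{R}_c$ chosen to avoid the finitely many values that would either fail to create the missing nonzero entry or annihilate an entry that is already nonzero; since each step strictly increases $\text{Struct}(\mathbf{ZW})\leq TR$, the iteration terminates with equality. You instead construct $\mathbf{Z}$ in one shot, row by row, restricting to constant real entries and observing that the coefficient vectors causing cancellation in a given target entry form a proper linear subspace $V_k$ of $\mathbb{R}^{|J_i|}$ (proper because $(TR)_{ik}=1$ supplies some $j\in J_i$ with $\mathbf{W}_{jk}\neq 0$), so a vector outside the finite union $\bigcup_k V_k$ exists because a real vector space is not a finite union of proper subspaces. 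Both arguments rest on the same non-degeneracy phenomenon — cancellation imposes nontrivial algebraic conditions that a suitably generic choice avoids — but yours dispenses with the termination bookkeeping of the greedy repair and yields the slightly stronger conclusion that $\mathbf{Z}$ may be taken to be a constant real matrix. One cosmetic remark: with the paper's almost-everywhere definition of $\text{Struct}$ you do not actually need the entries of $\mathbf{W}$ to be rational; it suffices that each $\sum_{j\in J_i}c_{ij}\mathbf{W}_{jk}$ is not the a.e.-zero function, which is precisely what membership of $c_i$ outside $V_k$ provides.
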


We are now ready to prove Theorem~\ref{th:sparsity_invariance}.

$1) \Rightarrow 2)$: Let $\mathbf{X} \in \text{Sparse}(R)$ be invertible.  By Lemma~\ref{le:grows_fullest} we know that $\mathbf{X}^{-1} \in \text{Sparse}(R^{p-1})$. Now let $\mathbf{Y} \in \text{Sparse}(T)$. Since $TR^{p-1} \leq S$, we have $\mathbf{YX}^{-1} \in \text{Sparse}(S)$.

$2) \Rightarrow 1)$: We prove by contrapositive. First, suppose that $TR^{p-1} \not \leq S$. By the second statement of  Lemma~\ref{le:grows_fullest} it is possible to select $\mathbf{X}\in \text{Sparse}(R)$ such that $\text{Struct}(\mathbf{X}^{-1})=R^{p-1}$. By the latter and Lemma~\ref{le:T}, we can select $\mathbf{Y} \in \text{Sparse}(T)$ such that $\text{Struct}\left(\mathbf{YX}^{-1}\right)=TR^{p-1}$, or equivalently $\mathbf{YX}^{-1} \not \in \text{Sparse}(S)$. Next, suppose that $T\not \leq S$. Since $R\geq I_p$ by hypothesis, then $TR \not \leq S$ and $TR^{p-1} \not\leq S$. Hence, the same reasoning applies.

\subsection{Proof of Lemma~\ref{le:grows_fullest}}
\label{appsub:le:grows_fullest}
Suppose $\mathbf{X} \in \text{Sparse}(R)$ is invertible. By Cayley-Hamilton's theorem $\sum_{i=0}^{n}\lambda_i\mathbf{X}^i =0$ where $\{\lambda_i\}_{i=0}^{p}$, $\lambda_i \in \mathcal{R}_c$  for every $i=1,\ldots,p$ are the coefficients of the characteristic polynomial of $\mathbf{X}$ and $\lambda_0=\det{\mathbf{X}}\neq 0$. We remark that Cayley-Hamilton is valid over square matrices defined over a commutative ring, such as that of causal transfer functions \cite{picard1995feedback}. By pre-multiplying by $\mathbf{X}^{-1}$ and rearranging the terms:
\begin{equation}
\label{eq:inverse_sum}
\mathbf{X}\mathbf{}^{-1}=-\lambda_0^{-1}(\lambda_1 I_p+\lambda_2\mathbf{X}+\lambda_3\mathbf{X}^2+\cdots+\lambda_p\mathbf{X}^{p-1})\,.
\end{equation}
Since $R\geq I_p$ we have that $R^a \geq R^b$ for every integer $a\geq b$. Hence,  $\lambda_i \mathbf{X}^i \in \text{Sparse}\left(R^{p-1}\right)$ for every $i$ and the first statement follows by (\ref{eq:inverse_sum}).

For the second statement, we iteratively construct $\mathbf{X}$ starting from $\mathbf{X}=I_p$. Let $\mathbf{\alpha} \in \mathcal{R}_c$. Define $\tilde{\mathbf{X}}=\mathbf{X}+\mathbf{\alpha} e_i e_j^\mathsf{T}$.  Let  $\mathbf{X}^{-1}_{:,i} \in \mathcal{R}_c^{p \times 1}$  and  $\mathbf{X}^{-1}_{j,:}\in \mathcal{R}_c^{1 \times p}$ be the $i$-th column  and the $j$-th row of $\mathbf{X}^{-1}$ respectively, and let $\mathbf{X}^{-1}_{ij}$ be the entry $(i,j)$ of $\mathbf{X}^{-1}$. Using the Sherman-Morrison identity \cite{sherman1950adjustment}, if $\tilde{\mathbf{X}}$ is invertible we obtain
\begin{equation}
\label{eq:adding_to_i}
\tilde{\mathbf{X}}^{-1}_{i,:}=\mathbf{X}^{-1}_{i,:}-\frac{\mathbf{\alpha} \mathbf{X}^{-1}_{ii}}{1+\mathbf{\alpha} \mathbf{X}^{-1}_{ji}} \mathbf{X}^{-1}_{j,:}\,.
\end{equation}
 Recall that each entry of a transfer matrix is a transfer function defined over $s=j\omega$. Hence, by the definition of an invertible transfer matrix (see Section~\ref{se:preliminaries}), (\ref{eq:adding_to_i}) holds for almost every $\omega \in \mathbb{R}$. From (\ref{eq:adding_to_i}), for any $i$ and $\mathbf{\alpha} \in \mathcal{R}_c$, if $\mathbf{X}^{-1}_{ii}\neq 0$, then $\tilde{\mathbf{X}}^{-1}_{ii} \neq 0$. It follows that by choosing $\mathbf{\alpha}$ such that
\begin{align}
\label{eq:alpha_condition}
&\mathbf{\alpha} \mathbf{X}^{-1}_{ji} \neq -1 \text{ and }\mathbf{\alpha} \left(\mathbf{X}^{-1}_{ii} \mathbf{X}^{-1}_{jk}-\mathbf{X}^{-1}_{ji}\mathbf{X}^{-1}_{ik}\right) \neq \mathbf{X}^{-1} _{ik}\nonumber\\
&\text{for almost all }\omega \in \mathbb{R}\,,\nonumber\\
& \forall k\text{ subject to }\mathbf{X}^{-1}_{jk}\text{ and }\mathbf{X}^{-1}_{ik} \text{ are not both null}\,,
\end{align}
we obtain that
\begin{align}
\label{eq:augment_structure}
&\text{Struct}\left(\tilde{\mathbf{X}}^{-1}_{i,:}\right)=\text{Struct}\left(\mathbf{X}^{-1}_{i,:}\right)+\text{Struct}\left( \mathbf{X}^{-1}_{j,:}\right)\,,\\
&\text{for almost all }\omega \in \mathbb{R}\nonumber\,.
\end{align}
 The condition (\ref{eq:alpha_condition}) is derived by setting the right hand side of (\ref{eq:adding_to_i}) to be different from $0$ for every $k$ such that  $\mathbf{X}^{-1}_{ik}$ and $\mathbf{X}^{-1}_{jk}$ are not both null for every $\omega \in \mathbb{R}$. Observe that $\mathbf{\alpha}$ as per (\ref{eq:alpha_condition}) always exists, because there is no $k$ such that $\mathbf{X}^{-1}_{ik}$ and $\mathbf{X}^{-1}_{jk}$ are both null for every $\omega \in \mathbb{R}$, and hence $\mathbf{\alpha} \left(\mathbf{X}^{-1}_{ii} \mathbf{X}^{-1}_{jk}-\mathbf{X}^{-1}_{ji}\mathbf{X}^{-1}_{ik}\right) \neq \mathbf{X}^{-1} _{ik}$ always admits a solution in $\alpha \in \mathcal{R}_c$. The structural augmentation (\ref{eq:augment_structure}) is exploited in the algorithm below.
	\begin{algorithm}[H]
\label{alg:buildXinv}
  \begin{algorithmic}[1]
  \State Set $\mathbf{X}=I_p$
  \Repeat
  \Comment{max. $(|R|-p)(p-1)$ iterations}
  \For{ each $(i,j)$ such that $i\neq j$ and $R_{ij}=1$ }
  	\State Choose $\mathbf{\alpha}$  according to (\ref{eq:alpha_condition})
  	\State $\mathbf{X}\leftarrow \mathbf{X}+\mathbf{\alpha} e_ie_j^\mathsf{T}$
  \EndFor
  \Until{$\text{Struct}(\mathbf{X}^{-1})=R^{p-1}$}
  \State Return $\mathbf{X}$
  \end{algorithmic}
\end{algorithm}
The algorithm  returns a matrix $\mathbf{X}$ such that $\text{Struct}(\mathbf{X}^{-1}) =R^{p-1}$. Specifically, by exploiting (\ref{eq:augment_structure}) we obtain that $\text{Struct}(\mathbf{X}^{-1}) \geq R^{s}$ at the end of the $s$-th iteration of the ``repeat-until'' cycle.

\subsection{Proof of Lemma~\ref{le:T}}
\label{appsub:le:T}

Let $\mathbf{Z}$ be any transfer matrix in $\text{Sparse}(T)$. Assume that $\text{Struct}(\mathbf{ZW})<TR$. Then, for some $(i,j,k)$ we have that $\mathbf{ZW}_{ij}=0$ and $T_{ik}=R_{kj}=1$. We know by hypothesis that  $\mathbf{W}_{kj}\neq 0$. Since $\sum_{l=1}^p\mathbf{Z}_{il}\mathbf{W}_{lj}=0$, it is sufficient to update $\mathbf{Z}_{ik}$ with $\mathbf{Z}_{ik}+\mathbf{\alpha}$ for any $\mathbf{\alpha} \neq 0$ in $\mathcal{R}_c$ to guarantee that $\mathbf{ZW}_{ij} \neq 0$. Furthermore, by choosing $\mathbf{\alpha} \neq -\frac{\mathbf{ZW}_{it}}{\mathbf{W}_{kt}}$ for all $t$ such that $\mathbf{ZW}_{it}\neq 0$, we avoid that adding $\mathbf{\alpha}$ to $\mathbf{Z}_{ik}$ brings $\mathbf{ZW}_{it}$ to $0$ when $\mathbf{ZW}_{it}\neq 0$. Hence, it is always possible to choose $k$ and $\mathbf{\alpha}$ such that $\mathbf{ZW}+\mathbf{\alpha} e_ie_k^\mathsf{T}>\mathbf{ZW}$ and $\mathbf{Z} \in \text{Sparse}(T)$. By iterating the procedure for all $(i,j)$ such that $\text{Struct}(\mathbf{ZW})_{ij}<TR_{ij}$, we converge to $\text{Struct}(\mathbf{ZW})=TR$.

	\bibliographystyle{IEEEtran}

	\bibliography{IEEEabrv,references2}
	
\end{document}